\newtheorem{theorem}{Theorem}
\newtheorem{proposition}{Proposition}
\newtheorem{lemma}{Lemma}
\newtheorem{definition}{Definition}
\providecommand{\eref}[1]{\eqref{#1}}  
\providecommand{\cref}[1]{Chapter~\ref{#1}}
\providecommand{\fref}[1]{Figure~\ref{#1}}
\providecommand{\R}{\ensuremath{\mathbb{R}}}
\providecommand{\E}{\ensuremath{\mathbb{E}}}
\providecommand{\bydef}{\overset{\text{def}}{=}}
\renewcommand{\vec}[1]{\ensuremath{\boldsymbol{#1}}}
\providecommand{\calN}{\mathcal{N}}
\providecommand{\vh}{\vec{h}}
\providecommand{\vu}{\vec{u}}
\providecommand{\vw}{\vec{w}}
\providecommand{\vx}{\vec{x}}
\providecommand{\vy}{\vec{y}}
\providecommand{\vz}{\vec{z}}
\providecommand{\vnu}{\vec{\nu}}
\providecommand{\vyhat}{\boldsymbol{\widehat{y}}}
\providecommand{\Var}{\mathrm{Var}}
\newcommand{\argmin}[1]{\mathop{\underset{#1}{\mbox{argmin}}}}
\begin{document}

\title{Rethinking Atmospheric Turbulence Mitigation}

\author{Nicholas~Chimitt,~\IEEEmembership{Student~Member,~IEEE,}
        Zhiyuan~Mao,~\IEEEmembership{Student~Member,~IEEE,}
        Guanzhe~Hong,~\IEEEmembership{Student~Member,~IEEE,}
        and~Stanley~H.~Chan,~\IEEEmembership{Senior~Member,~IEEE}
\thanks{The authors are with the Department
of Electrical and Computer Engineering, Purdue University, West Lafayette,
IN, 47906 USA. Email: \{nchimitt, mao114, hong288, stanchan\}@purdue.edu.}}

\maketitle

\begin{abstract}
State-of-the-art atmospheric turbulence image restoration methods utilize standard image processing tools such as optical flow, lucky region and blind deconvolution to restore the images. While promising results have been reported over the past decade, many of the methods are agnostic to the physical model that generates the distortion. In this paper, we revisit the turbulence restoration problem by analyzing the reference frame generation and the blind deconvolution steps in a typical restoration pipeline. By leveraging tools in large deviation theory, we rigorously prove the minimum number of frames required to generate a reliable reference for both static and dynamic scenes. We discuss how a turbulence agnostic model can lead to potential flaws, and how to configure a simple spatial-temporal non-local weighted averaging method to generate references. For blind deconvolution, we present a new data-driven prior by analyzing the distributions of the point spread functions. We demonstrate how a simple prior can outperform state-of-the-art blind deconvolution methods. 
\end{abstract}

\begin{IEEEkeywords}
Atmospheric turbulence, reference frame, lucky region, blind deconvolution
\end{IEEEkeywords}

%
\IEEEpeerreviewmaketitle

\section{Introduction}
\subsection{Motivation and Contributions}
\IEEEPARstart{A}{tmospheric} turbulence is one of the most devastating distortions in long-range imaging systems. Under anisoplanatic conditions, a scene viewed through turbulence is perturbed by random warping and blurring that are spatially and temporally varying. Their magnitudes and directions are influenced by temperature, distance and viewing angle \cite{roggemann1996imaging}. Conventional turbulence restoration methods utilize standard image processing tools such as optical flow, lucky region fusion and blind deconvolution to recover images. While these methods are well-studied individually, they are agnostic to the physical  model governing the turbulence. For example, the warping due to turbulence is not an arbitrary non-rigid deformation but the result of a wave propagating through layers of random phase screens.

The goal of this paper is to revisit the turbulence restoration pipeline by asking a question: If we rigorously follow the Kolmogorov's model \cite{Kolmogorov1941}, how should each component in the turbulence restoration pipeline be configured so that the overall algorithm is grounded on physics. Our finding shows that when carefully designed, even very simple methods can perform better than sophisticated methods. See \fref{fig: exp house} for a comparison between different turbulence image restoration methods applied to a static scene.

To elaborate on this main statement, in this paper we investigate two steps of the turbulence restoration pipeline:

\begin{figure}[t]
	\centering
	\begin{tabular}{ccc}
		\includegraphics[width=0.3\linewidth]{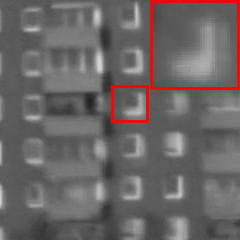}&
		\hspace{-2ex}\includegraphics[width=0.3\linewidth]{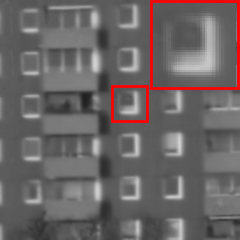}&
		\hspace{-2ex}\includegraphics[width=0.3\linewidth]{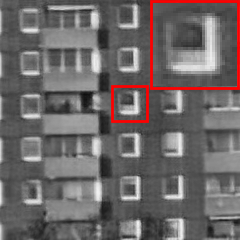}\\
		\small{(a) Input} & \hspace{-2ex} \small{(b) Lucky Region} & \hspace{-2ex} \small{(c) Anan. et al. \cite{Anantrasirichai2013}}\\
		\includegraphics[width=0.3\linewidth]{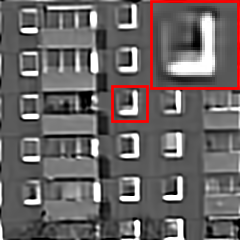}&
		\hspace{-2ex}\includegraphics[width=0.3\linewidth]{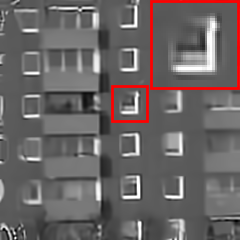}&
		\hspace{-2ex}\includegraphics[width=0.3\linewidth]{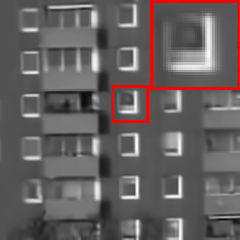}\\
		\small{(d) Lou et al. \cite{Lou2013}} & \hspace{-2ex} \small{(e) Zhu et al. \cite{Milanfar2013}} & \hspace{-2ex} \small{(f) Ours}
	\end{tabular}
	\caption{Comparison with state-of-the-arts on a real turbulence sequence \texttt{Building}. The proposed method is based on analyzing the physics of the turbulence.}
	\vspace{-2ex}
	\label{fig: exp house}
\end{figure}

\begin{figure*}[t]
	\centering
	\includegraphics[width=0.8\linewidth]{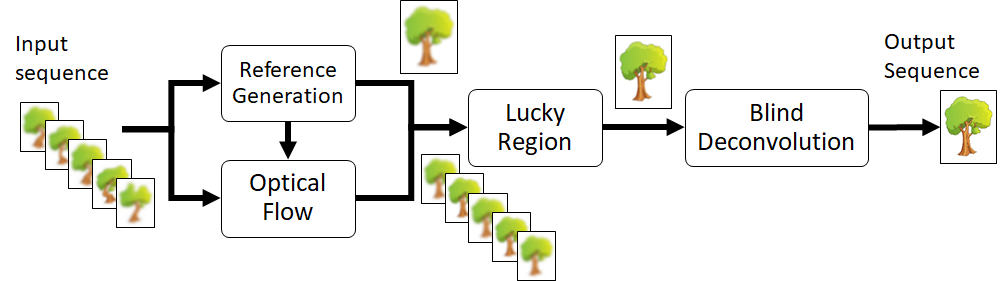}
	\caption{A typical turbulence restoration pipeline contains (i) reference frame generation, (ii) optical flow, (iii) lucky region fusion, and (iv) blind deconvolution. This paper focuses on the reference frame generation and the blind deconvolution steps.}
	\label{fig:general_static_block}
\end{figure*}

\begin{itemize}
	\item \textbf{Reference Frame}. Majority of the turbulence restoration pipelines involve optical flow and lucky region fusion. Both steps require a good reference, and typically the reference is computed from its neighboring frames. The number of frames plays a critical role here. If we use too few frames, then the turbulence pixels are not stabilized. However, if we use too many frames, then the image will be over-smoothed. Typically, the number of frames is unknown ahead of the restoration, and is tuned manually by the user. More sophisticated methods have built-in iterative mechanisms to update the reference while recovering the image, but these methods are time consuming. 
	
	\hspace{2ex} We study the reference frame generation problem from a physics point of view. Using a simplified Kolmogorov turbulence model, we assume that the turbulence point spread function is a kernel with random spatial offsets. By leveraging tools in high dimensional probability, in particular the large deviation theory, we rigorously analyze the number of adjacent frames required to produce a reasonable reference frame. Our theoretical results reveal potential flaws that could happen if we ignore the physics. (See Section 2.)
	\vspace{1ex}
	\item \textbf{Blind Deconvolution}. Blind deconvolution is used to remove the diffraction limited blur after the lucky region fusion step. Normally, at this stage one would assume that the blur is spatially invariant and so any off-the-shelf blind deconvolution method can be applied (e.g., deep neural network). However, rather than treating the blur as a completely unknown quantity, we argue that the diffraction limited blur in turbulence has a unique prior which can offer a good solution. We articulate the problem by building an accurate turbulence simulator (via wave propagation equations) to generate short exposure point spread functions, and learn the basis functions as well as the prior distribution. We show that a very simple Bayesian estimation is sufficient to provide high quality results. (See Section 3.)
\end{itemize}

\subsection{Related Work}
Turbulence image restoration is a well studied subject. The focus of this paper is the image processing approach. The underlying assumption is that the imaging system is passively acquiring images where the light source is incoherent. We do not use coherent light sources to illuminate the object and use adaptive optics to compensate for the phase shifts. Readers interested in active imaging approaches can consult, e.g., \cite{tyson2010principles}.

The image processing literature on turbulence is rich. In general, most of the methods follow a similar pattern: Reference frame, optical flow, lucky region, and deconvolution, as shown in  \fref{fig:general_static_block}. In the followings we briefly describe a few better known methods. 

\vspace{1ex}
\begin{itemize}
\item \textbf{Methods for Static Scenes}. Most of the turbulence image restoration methods in the literature are designed for static scenes, i.e., both the object and the background are not moving. Because the scene is static, all pixel movements are caused by turbulence. Therefore, one of the simplest approaches to generate a reliable reference frame is to take the temporal average. This has been used in many previous work, e.g., Lou et al. \cite{Lou2013}, Zhu and Milanfar \cite{Milanfar2013}, Gilles and Osher \cite{ Gilles2016}, and more recently Hardie et al. \cite{Hardie2017} and Lau et al. \cite{Lau2017}. 

\hspace{2ex} Once the reference frame is generated, it will be sent to an optical flow to estimate the motion. Depending on the complexity of the scene and the computing budget, optical flow can be as simple as the traditional block matching by Hardie et al. \cite{Hardie2017} or more customized methods such as B-spline by Zhu and Milanfar \cite{Milanfar2013}, or feature matching by Anantrasirichai et al. \cite{Anantrasirichai2013}.

\hspace{2ex} The output of the optical flow is a sequence of motion compensated frames. If the scene is static, these processed frames are aligned but the blur is spatially varying. The purpose of the lucky region fusion is to pick the sharp regions to form a so called ``lucky frame''. The way to determine the lucky region is very similar to the reference frame. Instead of using the temporal average, there is a term measuring the magnitude of the gradient \cite{Lou2013, Anantrasirichai2013}. Sharper frames typically have stronger gradients.

\hspace{2ex} The final step of the pipeline (\fref{fig:general_static_block}) is the blind deconvolution. In principle, the image sent to this stage should have been recovered except the diffraction limited blur. The goal of blind deconvolution is to remove the remaining blur. Since blind deconvolution is a generic problem, many methods can be used, e.g., Wiener filtering by Hardie et al. \cite{Hardie2017}  or minimizing energy functions such as total variation as in  \cite{Lou2013,Gilles2016,Milanfar2013}. 
\vspace{1ex}
\item \textbf{Methods for Dynamic Scenes}. Methods for dynamic scenes (moving foreground and a static background) have more variations. For example, instead of using the lucky region fusion, Gilles and Osher \cite{Gilles2016} proposed to use wavelet burst accumulation to boost high frequency components. For large moving objects, Nieuwenhuizen et al. \cite{Nieuwenhuizen2017} proposed to use a super resolution fusion step to ensure spatial consistency, and Huebner \cite{huebner2012} proposed a block matching algorithm and local image stacking. 

\hspace{2ex} Because of the moving foreground, one alternative approach is to use advanced segmentation algorithms to extract the foreground. The background can then be recovered using the static scene methods. Several papers are based on this idea, e.g., Oreifej et al. \cite{Oreifej2013}, Halder et al. \cite{Halder2015} and  Anantrasirichai et al. \cite{Anan2018}. However, a fundamental issue of segmentation-based methods is that in the presence of turbulence distortion, the object boundaries are very difficult to determine. Thus, artifacts are easily generated by these methods.

\vspace{1ex}
\item \textbf{Other Methods}. Beyond the above ``mainstream'' methods, there are also customized approaches for specific context, e.g., underwater imaging using deep learning \cite{zhengqin2018}, among a few others, e.g., bispectral analysis \cite{Wen2010}, infra-red \cite{Droege2012}, face recognition in turbulence \cite{Kamenetsky2018}, and holographic systems \cite{zhao2013}. 
\end{itemize}

In this paper, we focus on the pipeline shown in \fref{fig:general_static_block} because it is the most common pipeline which can be applied to both static and dynamic scenes. Among the components of the pipeline, we are particularly interested in the reference generation and the blind deconvolution step. The optical flow and the lucky region fusion are based on existing implementations. For example, for optical flow we use \cite{Liu2009optflow}, and for lucky region fusion we use a modified version of \cite{aubaillyLucky}.

\section{Reference Frame}
We first look at reference frame generation. The objective of this section is to present a simple and generic method. The idea is based on spatial-temporal non-local weighted averaging. After presenting the method, we will rigorously analyze the number of adjacent frames required for the averaging. We will show a few non-trivial results based on large deviation theory. 

To keep our notation simple, we consider only one dimension in space. Extension to two dimensions is straight forward.

\begin{figure}[t]
	\vspace{-2ex}
	\centering
	\includegraphics[width=\columnwidth]{./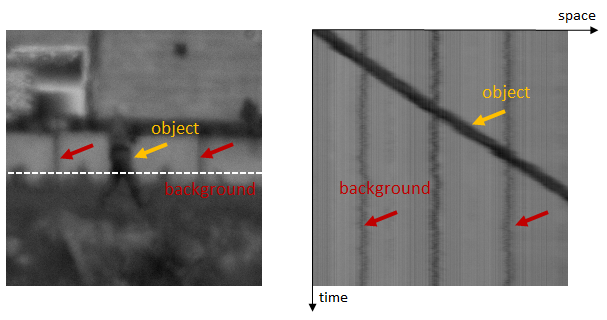}
	\vspace{-4ex}
	\caption{Space-time plot of a moving scene. The moving object will cause a significant displacement in the space-time plot. However, turbulence only perturbs a pixel by vibrating them around the center positions.}
	\label{fig:cross section}
\end{figure}

\subsection{Non-local Reference Generation Method}
The motivation of our reference generation method is illustrated in \fref{fig:cross section}, where we show a space-time plot of a typical turbulence distorted sequence. The moving object and a static pattern demonstrate very different trajectories: The moving object shows clear movement across the space, whereas a static pattern only vibrates at its center location. This difference suggests that if we pick a local patch distorted by turbulence, we should be able to find a similar patch in a small spatial-temporal neighborhood. In contrast, it will be more difficult to find a match for a motion patch.

Let $\vy_t \in \R^N$ be the $t$-th frame of the input video. Let $\vy_{i,t} \in \R^d$ be a $d$-dimensional patch located at pixel $i$. We set up a spatial-temporal search window of size $L \times T$, and then compute the distance between the current patch $\vy_{0,0} \in \mathbb{R}^d$ and all other patches $\{\vy_{i,t} \; |\;  i \in \Omega_L, \; t \in \Omega_T\}$ within the search window. This gives us
\begin{equation}
\Delta_{i,t} = \|\vy_{i,t}-\vy_{0,0}\|^2,
\end{equation}
where $i \in \Omega_L \bydef \{1,\ldots,L\}$ and $t \in \Omega_T \bydef \{1,\ldots,T\}$. Intuitively, we can think of $\Delta_{i,t}$ as a measure of how similar $\vy_{i,t}$ is to $\vy_{0,0}$. If $\vy_{0,0}$ is distorted by turbulence, then at least one of these $L$ patches in the adjacent frame should be similar to $\vy_{0,0}$. If $\vy_{0,0}$ is a moving object, then no patch in the window will be similar to $\vy_{0,0}$. See \fref{fig:test} for a pictorial illustration. Thus, for every frame, we can check the smallest residue among the spatial neighborhood, and define the temporal weight as 
\begin{equation}
w_t = \exp\left\{-\beta \min_{i \in \Omega_L} \left\{\Delta_{i,t}\right\} \right\}.
\label{eq: NL weight}
\end{equation}
Then, we use $w_t$ to compute the reference patch via
\begin{equation}
\vyhat_{0,0} = \frac{\sum_{t \in \Omega_T } w_t \vy_{0,t} }{ \sum_{t \in \Omega_T} w_t }.
\label{eq: NL ref patch construction}
\end{equation}
Note that $\vyhat_{0,0}$ overlaps when we move to another patch of the image. The overlapping can be taken care of by averaging out the overlapping pixels. 

We emphasize that \eref{eq: NL ref patch construction} is an extremely simple operation. It does not require object segmentation such as \cite{Oreifej2013, Halder2015, Anan2018}, and yet it is applicable to both static and dynamic scenes.

\begin{figure}
	\centering
	\begin{subfigure}{1\columnwidth}
		\centering
		\includegraphics[width=\columnwidth]{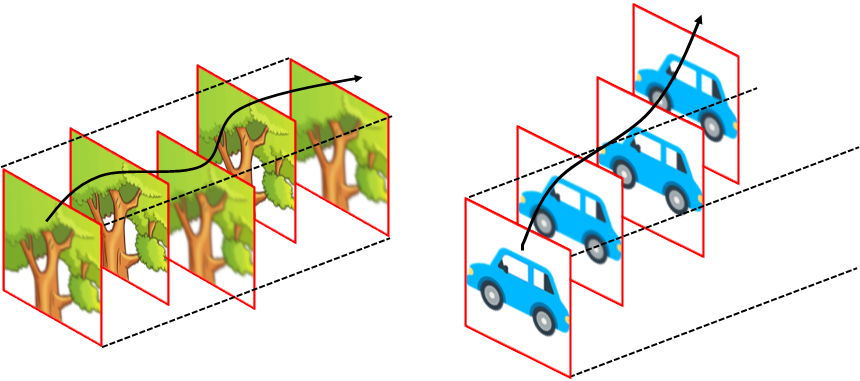}
		\label{fig:sub2}
	\end{subfigure}
	\vspace{-4ex}
	\caption{Given a current patch, the proposed reference frame generation method computes a weighted average across the adjacent frames. [Left] Turbulence: We average every patch along the time axis. [Right] Motion: We put small weights to patches that stays far from the center.}
	\label{fig:test}
\end{figure}

\subsection{Empirical Plot of $\beta$ for Static Scenes}
Like any other non-local averaging method, the hyper-parameter $\beta$ in \eqref{eq: NL weight} plays a critical role: If $\beta$ is too large, then we are dropping most of the adjacent frames, hence the result is reliant on $\vy_{0,0}$. If $\beta$ is too small, then we are being over-inclusive. In principle, $\beta$ should be chosen according to the turbulence. We now discuss how.

In Kolmogorov's model, turbulence is characterized by the refractive-index structure parameter $C_n^2$. $C_n^2$ is a function of the temperature, wavelengths and distance \cite{Hardie2017}. Integrating $C_n^2$ over the wave propagation path will give us the Fried parameter $r_0$. The reciprocal of $r_0$ normalized by the aperture diameter $D$ is a quantity $D/r_0$ we typically see in the literature. Larger $D/r_0$ means stronger turbulence \cite{roggemann1996imaging}. 

To initiate the discussion let us look at Figure~\ref{fig:beta_vs_ratio}. The figure shows an empirical plot of the best $\beta$ as a function of $D/r_0$ for a static point source. We generate this plot by simulating how the point source goes through the turbulence media for a specific $D/r_0$ ratio (see Appendix for details of our simulator). We then pick the largest $\beta$ \footnote{We pick the largest $\beta$ because it corresponds to the minimum number of frames.} that makes $\|\vz_0 - \widehat{\vy}_{0}\|^2 \le \epsilon$ for some tolerance level $\epsilon$, where $\vz_0$ is the ground truth and $\widehat{\vy}_0$ is the estimated reference point source. The experiment is repeated 10 times to average out the randomness of the individual turbulence. We report the mean and the standard deviation. 

Figure~\ref{fig:beta_vs_ratio} matches with our intuitions: As turbulence becomes stronger (larger $D/r_0$), we require more frames to average out the randomness (hence $\beta$ drops). But what is the exact relationship between the turbulence strength and the number of frames? In addition, why does $\beta$ stay at a constant when $D/r_0 < 1$?

\begin{figure}[t]
	\centering
	\includegraphics[width = \columnwidth]{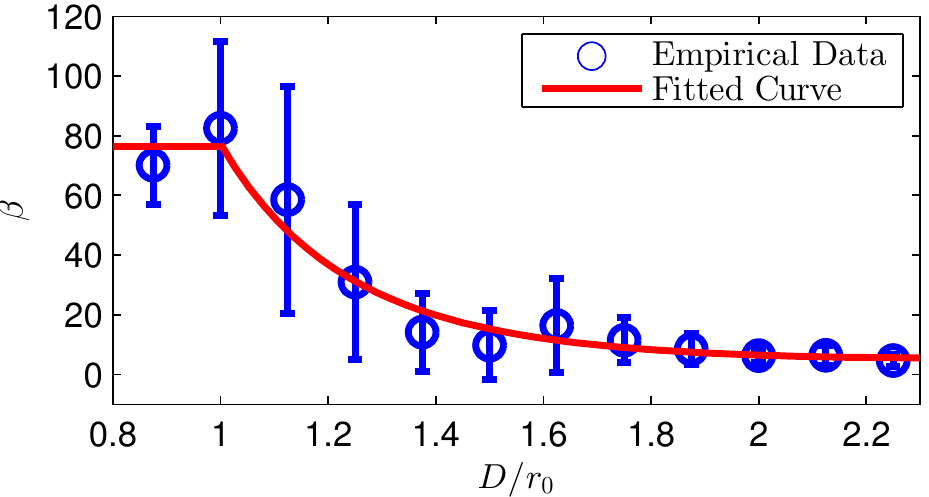}
	\caption{The optimal $\beta$ as a function of the turbulence strength $\frac{D}{r_0}$. The circles are the average of 10 independent trials and the error bar denotes one standard deviation.}
	\label{fig:beta_vs_ratio}
	\vspace{-2ex}
\end{figure}

\subsection{Short and Long Exposure PSFs}
To understand the behavior of $\beta$, we recall an old result by Fried \cite{Fried65} showing that 90\% of the disturbance due to turbulence is attributed to the random shifting of the points spread function (PSF). What this suggests is a simple model for the PSF by writing it as $h(x-\Theta)$, where $h(\cdot)$ is a fixed kernel and $\Theta$ is a random variable drawn from some distribution $p_{\Theta}(\theta)$. Therefore, given a fixed shape of the PSF $h(\cdot)$, we can shift it spatially to obtain an instantaneous PSF. The fluctuation of the shift is determined by  $p_{\Theta}(\theta)$. 

\begin{figure}[h]
    \centering
    \includegraphics[width=\linewidth]{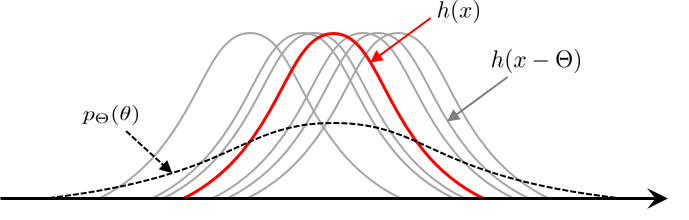}
    
    \vspace{-2ex}
    \caption{A simplified model of the PSF by shifting $h(x)$ using a random offset. The distribution of $\Theta$ is given by $p_{\Theta}(\theta)$.}
    \label{fig:kernel}
\end{figure}

In turbulence terminology, the un-perturbed PSF $h(\cdot)$ is called a \emph{short-exposure PSF} (short-PSF) \cite{roggemann1996imaging}. For mathematical analysis we assume that $h(\cdot)$ is a smoothing kernel with some parameter $\nu$.
\begin{definition}
	The point spread function (PSF) $h(\cdot)$ takes the form of
	\begin{equation}
	h_{\nu}(x) = \frac{1}{\nu}K\left(\frac{x}{\nu}\right)
	\end{equation}
	for some smoothing kernel $K$, and some constant $\nu>0$.
	\label{def:short-psf from smoothing kernel}
\end{definition}
In this definition, the number $\nu$ controls the ``bandwidth'' of $h_{\nu}$ without changing its ``volume''. The expectation $\E_{\Theta}[h_{\nu}(x-\Theta)]$ is called the \emph{long-exposure PSF} (long-PSF). The long-PSF can be shown as the convolution of the short-PSF and the distribution $p_{\Theta}(\theta)$:
\begin{equation}
\begin{split}
\E_{\Theta}[h_{\nu}(x-\Theta)] 
&= \int h_{\nu}(x-\theta)p(\theta)d\theta \\
&= (h_{\nu} \circledast p_{\Theta})(x).
\end{split}
\end{equation}
To analyze the practical situation, we also define a finite sample estimate
\begin{equation}
\label{long_exposure}
\widetilde{h}_{\nu}(x) \bydef \frac{1}{T} \sum_{t=1}^{T} h_{\nu}(x-\theta_t),
\end{equation}
where $\{\theta_t\}$ are i.i.d. copies of $\Theta$. Our goal is to analyze how $T$ changes with $\Var[\Theta]$, as $T$ is a proxy for $\beta$ and $\Var[\Theta]$ is a proxy for $D/r_0$.

\vspace{1ex}
\noindent \textbf{Remark}: What is the distribution $p_{\Theta}(\theta)$? If we look at how the short exposure and the long exposure PSFs are derived in the literature, we can see that the distribution is in fact a Gaussian. See Roggemann \cite{roggemann1996imaging} (Section 3.5). In particular, take the ratio of the long exposure PSF in Equation 3.125 and the short exposure PSF Equation 3.135. Because of this, we model the distribution $p_{\Theta}(\theta)$ as a Gaussian with zero mean and variance $\sigma^2$: $\Theta \sim \calN(0,\sigma^2)$.

\subsection{Concentration of Measure Results}
We now present the main theoretical result. The following theorem shows that, as $T$ increases, the finite sample estimate $\widetilde{h}_{\nu}(x)$ will approach its expectation $(h_{\nu} \circledast p_{\Theta})(x)$ with high probability. 
\begin{theorem}(Concentration of $\widetilde{h}_{\nu}(x)$). For any $\epsilon>0$,
	\begin{equation}
	\begin{aligned}
	& \sup_{x\in\mathbb{R}} \; \mathbb{P}\left(\left| \widetilde{h}_{\nu}(x) - (h_{\nu} \circledast p_{\Theta})(x) \right| > \epsilon  \right) \le \\ 
	&\quad\quad  2\exp\left\{-\frac{\epsilon^2 T}{2\sup\limits_{x\in\mathbb{R}} V_{\nu}(x,\sigma)+2\nu^{-1}M\epsilon/3}\right\} \bydef \alpha,
	\end{aligned}
	\label{semi-uniform concentration inequality}
	\end{equation}
	where $M$ is an upper bound of $K(\cdot)$, i.e., $0 \le K(x) \le M$ for all $x \in \R$, and $V_{\nu}(x,\sigma) \bydef \Var[h_{\nu}(x-\Theta)]$.
\end{theorem}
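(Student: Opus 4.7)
The plan is to recognize the statement as a textbook Bernstein-type concentration inequality applied pointwise and then made uniform in $x$ by bounding the variance in the exponent by its supremum.

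First I would fix $x \in \mathbb{R}$ and define the i.i.d.\ random variables $X_t \bydef h_{\nu}(x-\theta_t)$ for $t = 1,\ldots,T$, so that $\widetilde{h}_{\nu}(x) = \frac{1}{T}\sum_{t=1}^{T} X_t$ and $\E[X_t] = (h_{\nu} \circledast p_{\Theta})(x)$ by the computation already carried out just before the theorem. The deviation event is then exactly a deviation of the empirical mean of $\{X_t\}$ from its true mean, which is the canonical setting for Bernstein's inequality.

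Next I would verify the two ingredients Bernstein's inequality requires. For the almost-sure bound, Definition~\ref{def:short-psf from smoothing kernel} combined with the hypothesis $0 \le K(x) \le M$ gives $0 \le X_t = \frac{1}{\nu} K\!\left(\frac{x-\theta_t}{\nu}\right) \le \nu^{-1} M$, and hence $|X_t - \E X_t| \le \nu^{-1} M$ almost surely. For the variance, by definition $\Var(X_t) = V_{\nu}(x,\sigma)$. Plugging these into the standard Bernstein bound yields
\begin{equation*}
\Pb\!\left(\left|\widetilde{h}_{\nu}(x) - (h_{\nu}\circledast p_{\Theta})(x)\right| > \epsilon\right) \le 2\exp\!\left\{-\frac{\epsilon^2 T}{2 V_{\nu}(x,\sigma) + 2\nu^{-1} M \epsilon / 3}\right\}.
\end{equation*}

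Finally, to obtain the semi-uniform form stated in the theorem, I would upper bound $V_{\nu}(x,\sigma)$ in the denominator by $\sup_{x\in\R} V_{\nu}(x,\sigma)$. Since the resulting exponential is independent of $x$, taking the supremum over $x$ on the left-hand side preserves the inequality, which gives exactly the claimed bound $\alpha$. There is no real obstacle here beyond checking that Bernstein's inequality is applicable; the only mild subtlety is handling the centering in the boundedness condition, which is harmless because $X_t$ is nonnegative and bounded above by $\nu^{-1} M$, forcing $|X_t - \E X_t|$ to also lie in $[0, \nu^{-1} M]$. In a longer write-up one would spend a line noting that $V_{\nu}(x,\sigma)$ is finite (which follows from the same boundedness) so that the supremum in the denominator is well-defined whenever $K$ is a genuine bounded smoothing kernel.
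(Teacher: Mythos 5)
Your proof is correct and is essentially the paper's own argument: the paper likewise applies Bernstein's inequality to the i.i.d.\ bounded variables $h_{\nu}(x-\theta_t)$ (bounded by $\nu^{-1}M$, with variance $V_{\nu}(x,\sigma)$) to obtain the pointwise bound, then takes $\sup_{x\in\mathbb{R}}$ of both sides and moves the supremum onto $V_{\nu}(x,\sigma)$ inside the exponent by monotonicity. Your ordering---replacing $V_{\nu}(x,\sigma)$ by $\sup_{x\in\mathbb{R}}V_{\nu}(x,\sigma)$ first and then taking the supremum of the left-hand side---is the same step phrased slightly differently, so there is nothing to add.
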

\begin{proof}
See Appendix for proof. 
\end{proof}

There are several important implications of the theorem: 
\begin{itemize}
	\item Fix $\epsilon$. As number of frames $T$ increases, the probability of getting a large deviation is exponentially decaying. In terms of turbulence, it says that the finite sample PSF converges to the long-PSF, something we expect and something well-known.
	\item While $T$ can be arbitrarily large, in practice we always use the smallest $T$ such that the probability meets the tolerance upper bound $\alpha$. This is coherent with how $\beta$ is generated in \fref{fig:beta_vs_ratio}. 
	\item The smallest $T$ is determined by $V_{\nu}(x,\sigma)$. As the turbulence becomes stronger, $V_{\nu}(x,\sigma)$ increases. The theorem then predicts that we need a large $T$ to achieve a tolerance $\alpha$. This is precisely what is happening in \fref{fig:beta_vs_ratio} for $D/r_0 \ge 1$: the stronger turbulence we have, the more frames we need.
	
	\item A big surprise comes when we compute the variance:
	\begin{equation*}
	\begin{aligned}
	V_{\nu}(x,\sigma) = & \Var_{\Theta}[h_{\nu}(x-\Theta)] \\ 
	= & \E_{\Theta}[h_{\nu}^2(x-\Theta)] - \E_{\Theta}[h_{\nu}(x-\Theta)]^2 \\
	= & (h_{\nu}^2 \circledast p_{\Theta})(x) - (h_{\nu} \circledast p_{\Theta})^2(x).
	\end{aligned}
	\end{equation*}
	If we plot $\sup_{x\in\mathbb{R}} V_{\nu}(x,\sigma)$ as a function of $\sigma$ (i.e., $\sqrt{\Var[\Theta]}$), we obtain \fref{fig:supremum variance vs sigma}. As the turbulence strength $\sigma$ increases,  $\sup_{x\in\mathbb{R}}V_{\nu}(x,\sigma)$ rises and then \emph{drops}! In other words, the theorem predicts that when turbulence is extremely strong, we actually need very few frames. This is counter intuitive.
	\item More problematically, the existence of a maximum of $\sup_{x\in\mathbb{R}}V_{\nu}(x,\sigma)$ is guaranteed for \emph{any} PSF $h_{\nu}(\cdot)$. See Theorem 2 below. Therefore, the problem is universal.
\end{itemize}

\begin{theorem}
	If $h_{\nu}$ is Lipschitz continuous on $\mathbb{R}$, or continuous but with compact support on $\mathbb{R}$, then $\sup_{x\in\mathbb{R}}V_{\nu}(x,\sigma)$, as a function of $\sigma$, attains a global maximum on $(0,\infty)$.
	\label{maximum attained for lipschitz and continuous}
\end{theorem}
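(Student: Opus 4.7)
The plan is to define $\Phi(\sigma):=\sup_{x\in\mathbb{R}} V_{\nu}(x,\sigma)$ on $(0,\infty)$ and prove three things: (a) $\Phi(\sigma)\to 0$ as $\sigma\to 0^{+}$ and as $\sigma\to\infty$; (b) $\Phi(\sigma_0)>0$ for some $\sigma_0>0$; and (c) $\Phi$ is continuous on $(0,\infty)$. Together these force a global maximum: choose a compact interval $[\sigma_1,\sigma_2]\subset(0,\infty)$ containing $\sigma_0$ such that $\Phi<\Phi(\sigma_0)$ outside it, and invoke continuity on this compact set.

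For (a), I would start from $V_{\nu}(x,\sigma)\le\E_{\Theta}[(h_{\nu}(x-\Theta)-h_{\nu}(x))^{2}]$. In the Lipschitz case with constant $L$, this is at most $L^{2}\sigma^{2}$, so $\Phi(\sigma)\le L^{2}\sigma^{2}\to 0$ as $\sigma\to 0$. In the compact-support case, $h_{\nu}$ is uniformly continuous with modulus $\omega$, so the uniform-in-$x$ bound $\E[\omega(|\Theta|)^{2}]$ is controlled by $4\|h_{\nu}\|_\infty^{2}$ and tends to $0$ by dominated convergence. For $\sigma\to\infty$, I use $V_{\nu}(x,\sigma)\le (h_{\nu}^{2}\circledast p_{\Theta})(x)\le \|h_{\nu}^{2}\|_{1}\,\|p_{\Theta}\|_\infty = \|h_{\nu}^{2}\|_{1}/(\sqrt{2\pi}\,\sigma)$; in either case $h_{\nu}\in L^{1}\cap L^{\infty}$ (a bounded smoothing kernel with unit mass), so $h_{\nu}^{2}\in L^{1}$ and the bound vanishes uniformly in $x$. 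Property (b) is immediate: for any $\sigma>0$, since $h_{\nu}$ is a nonconstant continuous function and $\Theta$ has full support, $\Var_{\Theta}[h_{\nu}(x-\Theta)]>0$ for some $x$.

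The main step is (c). Because $\sigma\mapsto V_{\nu}(x,\sigma)$ is continuous for each fixed $x$, $\Phi$ is automatically lower semicontinuous; the difficulty is upper semicontinuity, for which I need a uniform-in-$\sigma$ tail bound in $x$. Fix a compact $[\sigma_1,\sigma_2]\subset(0,\infty)$ and split $(h_{\nu}^{2}\circledast p_{\Theta})(x)$ according to $|\theta|\le R$ versus $|\theta|>R$. The outer piece is at most $\|h_{\nu}^{2}\|_\infty$ times a Gaussian tail, bounded uniformly by $2\exp(-R^{2}/(2\sigma_2^{2}))$; the inner piece involves $h_{\nu}^{2}(x-\theta)$ with $|\theta|\le R$, and since $h_{\nu}$ vanishes at infinity (by compact support, or by being bounded, Lipschitz, and $L^{1}$), this piece is small once $|x|$ is large compared to $R$. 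Choosing $R$ and then $|x|$ large shows $V_{\nu}(x,\sigma)\to 0$ as $|x|\to\infty$ uniformly in $\sigma\in[\sigma_1,\sigma_2]$, so the supremum is attained on a box $[-R_0,R_0]$ independent of $\sigma$. Joint continuity of $V_{\nu}$ on $[-R_0,R_0]\times[\sigma_1,\sigma_2]$ (from continuity of $h_{\nu}$ plus $L^{1}$-continuity of $\sigma\mapsto p_{\Theta}$ via dominated convergence) then yields the standard max-on-compact continuity of $\Phi$ on $[\sigma_1,\sigma_2]$, and since the interval was arbitrary, on all of $(0,\infty)$.

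The hardest part I expect is exactly this uniform tail step. Without it the argmax in $x$ could escape to infinity as $\sigma$ varies and $\Phi$ would fail to be upper semicontinuous; the two hypotheses in the theorem (Lipschitz vs.\ compact support) both feed into this step by ensuring $h_{\nu}(y)\to 0$ as $|y|\to\infty$, which is what makes the inner piece of the split shrink with $|x|$.
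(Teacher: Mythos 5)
Your proposal is correct, and it shares the paper's overall skeleton (continuity of $\sigma\mapsto\sup_{x\in\mathbb{R}}V_{\nu}(x,\sigma)$, vanishing at the ends of $(0,\infty)$, then a compactness/exhaustion argument), but the key technical step is done by a genuinely different mechanism. The paper proves continuity of the supremum by showing that $V_{\nu}(\cdot,\sigma_n)\to V_{\nu}(\cdot,\sigma_0)$ \emph{uniformly in $x$} whenever $\sigma_n\to\sigma_0$, and then invoking a lemma that uniform convergence preserves suprema; the Lipschitz/compact-support hypotheses enter there, to secure that uniformity. You instead split continuity into lower semicontinuity (automatic, as a supremum of functions continuous in $\sigma$) and upper semicontinuity, which you get by showing the supremum localizes to a fixed box $[-R_0,R_0]$ uniformly over $\sigma\in[\sigma_1,\sigma_2]$ --- the hypotheses enter precisely to guarantee $h_{\nu}(y)\to 0$ as $|y|\to\infty$, so the maximizing $x$ cannot escape to infinity --- and then using joint continuity of $V_{\nu}$ on the compact box times $[\sigma_1,\sigma_2]$. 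The paper's route never needs to discuss where the supremum sits in $x$; yours makes that localization explicit and, as a bonus, produces quantitative rates the paper's sketch does not give: $\sup_x V_{\nu}(x,\sigma)\le L^2\sigma^2$ near $\sigma=0$ in the Lipschitz case, and $\sup_x V_{\nu}(x,\sigma)\le \|h_{\nu}^2\|_1/(\sqrt{2\pi}\,\sigma)$ as $\sigma\to\infty$ (this last bound is a clean way to get the paper's condition that the supremum tends to zero at infinity). Two small repairs to your write-up: the phrase ``the supremum is attained on a box independent of $\sigma$'' should be replaced by the statement that $\sup_{x\in\mathbb{R}}V_{\nu}(x,\sigma)$ is uniformly within $\epsilon$ of the continuous function $\sigma\mapsto\max_{|x|\le R_0}V_{\nu}(x,\sigma)$, since genuine attainment inside the box requires knowing the supremum exceeds $\epsilon$; and your positivity step (b) needs the degenerate case $h_{\nu}\equiv 0$ set aside (there the supremum is identically zero and the theorem holds vacuously). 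Also note your symbol $\Phi$ for the supremum clashes with the paper's use of $\Phi$ for the standard normal CDF, so it should be renamed if merged into the text.
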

\begin{proof}
The proof requires setting up several preliminary results in real analysis. We outline a sketch of the proof in the Appendix. Readers interested in the complete proof can check the Supplementary material.
\end{proof}

\begin{figure}[t]
	\centering
	\includegraphics[width=\columnwidth]{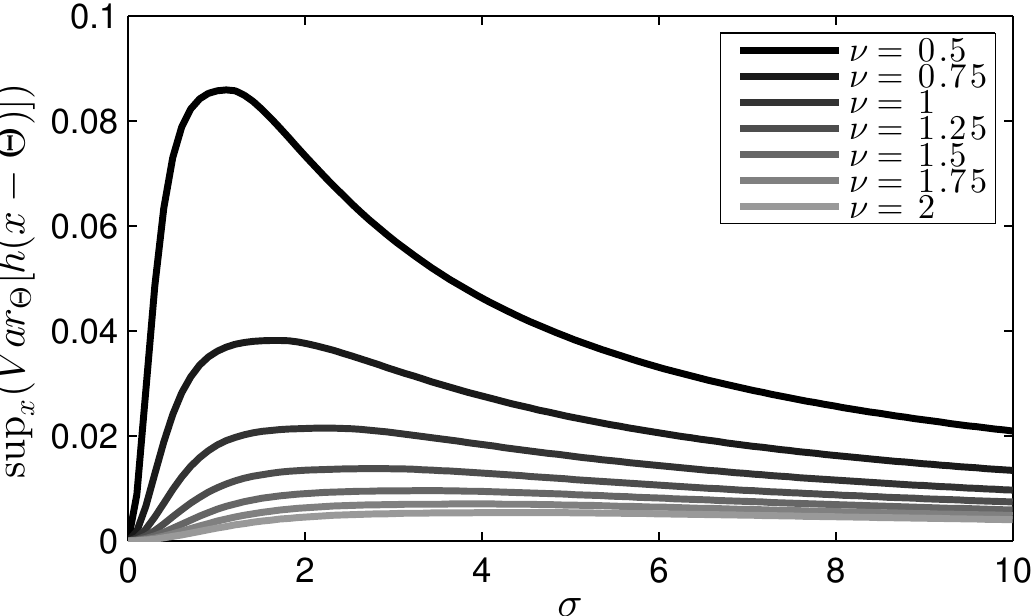}
	\vspace{-4ex}
	\caption{$\sup_{x\in\mathbb{R}}V_{\nu}(x,\sigma)$ vs. $\sigma$. The smoothing kernel $K$ is taken to be the standard Gaussian function here. The existence of a peak suggests that the blur and shift of a turbulence have to be increase simultaneously, or otherwise the result will be physically invalid.}
	\label{fig:supremum variance vs sigma}
	\vspace{-2ex}
\end{figure}

So what causes the disparity between our intuition and the theorem? The theorem is perfectly fine. What is not correct is a false assumption we overlooked. In the real turbulence setting, $\Var[\Theta]$ (i.e., $\sigma$) cannot be arbitrarily large. More specifically, the \emph{allowable} shift $\sigma$ for any real turbulence must be bounded by the bandwidth $\nu$ of the PSF. When $\sigma$ is bounded, we can show that $\sup_{x\in\mathbb{R}}V_{\nu}(x,\sigma)$ only operates in the increasing regime before reaching the maximum. Proposition \ref{prop: boxcar kernel sigma over nu} below shows a special case where $K$ is a boxcar kernel (so that we can derive analytic solution.)

\begin{proposition}
	If $K$ is the boxcar kernel, then for $\sigma\le \left(\Phi^{-1}\left(\frac{3}{4}\right)\right)^{-1} \nu \approx 1.48\nu$, $V_{\nu}(x=0,\sigma)$ is an increasing function in $\sigma$, where $\Phi(\cdot)$ is the normal CDF. 
	\label{prop: boxcar kernel sigma over nu}
\end{proposition}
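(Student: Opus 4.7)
The plan is to exploit the fact that the boxcar kernel is proportional to an indicator function, so that $h_\nu^2$ is itself proportional to $h_\nu$. This collapses $V_\nu(0,\sigma)$ to a Bernoulli-style variance of the form $p(1-p)$ (up to a constant), whose monotonicity is elementary. Concretely, I would write $h_\nu(x) = (c/\nu)\,\mathbf{1}_{[-\nu,\nu]}(x)$, taking the boxcar $K$ supported on $[-1,1]$ (this is what pins down the constant $\Phi^{-1}(3/4)^{-1}$ appearing in the statement, as opposed to an extra factor of $1/2$ that would appear with support $[-\tfrac{1}{2},\tfrac{1}{2}]$). At $x=0$, with $\Theta\sim\mathcal{N}(0,\sigma^2)$, the random variable $h_\nu(-\Theta)$ takes only two values: $c/\nu$ when $|\Theta|\le\nu$, and $0$ otherwise.

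Next, I would introduce $p(\sigma) \bydef \Pb(|\Theta|\le \nu) = 2\Phi(\nu/\sigma) - 1$ and compute
\begin{equation*}
\E[h_\nu(-\Theta)] = \frac{c}{\nu}\, p(\sigma), \qquad \E[h_\nu^2(-\Theta)] = \frac{c^2}{\nu^2}\, p(\sigma),
\end{equation*}
where the second identity uses $\mathbf{1}^2=\mathbf{1}$. Substituting into the definition of $V_\nu$ yields $V_\nu(0,\sigma) = (c^2/\nu^2)\, p(\sigma)\bigl(1-p(\sigma)\bigr)$, which is precisely a Bernoulli variance with success probability $p(\sigma)$, up to the fixed prefactor $c^2/\nu^2$.

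Finally, I would read off the monotonicity. The scalar map $q\mapsto q(1-q)$ is strictly increasing on $[0,1/2]$, while $\sigma \mapsto p(\sigma)$ is strictly decreasing (since $\nu/\sigma\downarrow 0$ and $\Phi$ is increasing). Hence $\sigma\mapsto V_\nu(0,\sigma)$ is non-decreasing on the range where $p(\sigma)\ge 1/2$. Solving $p(\sigma^*)=1/2$ gives $\Phi(\nu/\sigma^*)=3/4$, so $\sigma^* = \nu/\Phi^{-1}(3/4) \approx 1.48\,\nu$, which is the bound claimed. The main obstacle is conceptual rather than computational: the trick is to recognize that the idempotence $\mathbf{1}^2=\mathbf{1}$ converts a continuous variance into the classical $p(1-p)$, after which the proposition reduces to elementary monotonicity and a single inversion of $\Phi$.
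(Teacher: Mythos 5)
Your proposal is essentially the paper's own proof: the paper likewise reduces the boxcar case to $V_{\nu}(0,\sigma) = \frac{1}{4\nu^2}\left[g_{\nu}(\sigma) - g_{\nu}(\sigma)^2\right]$ with $g_{\nu}(\sigma) = 2\Phi(\nu/\sigma) - 1$ (your $p(\sigma)$, with the same support convention pinning down the constant), the only difference being that the paper finishes with calculus --- differentiating and checking the sign of $1-2g_{\nu}(\sigma)$ --- while you finish by composing monotone maps. One small fix in your last step: the fact you need is that $q \mapsto q(1-q)$ is \emph{decreasing} on $[1/2,1]$ (equivalent by symmetry to the fact you quote), since on the region $p(\sigma)\ge 1/2$ it is that branch which composes with the decreasing $p(\sigma)$ to make $V_{\nu}(0,\sigma)$ increasing; the branch you cite, increasing on $[0,1/2]$, pairs with the region $p(\sigma)\le 1/2$ and yields the opposite monotonicity there.
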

\begin{proof}
See the Appendix for proof. 
\end{proof}

The implication of Proposition~\ref{prop: boxcar kernel sigma over nu} is significant. It suggests that many warping models based on the ad-hoc non-rigid deformations could be flawed if not modeled properly. A realistic turbulence must have shift and blur happening at the same time. A crude approximation of the shift with respect to the blur is $\sigma \approx 1.48\nu$. An empirical plot of this result is shown in Figure \ref{fig: sigma vs nu}. 

\begin{figure}[h]
	\centering
	\includegraphics[width=0.9\columnwidth]{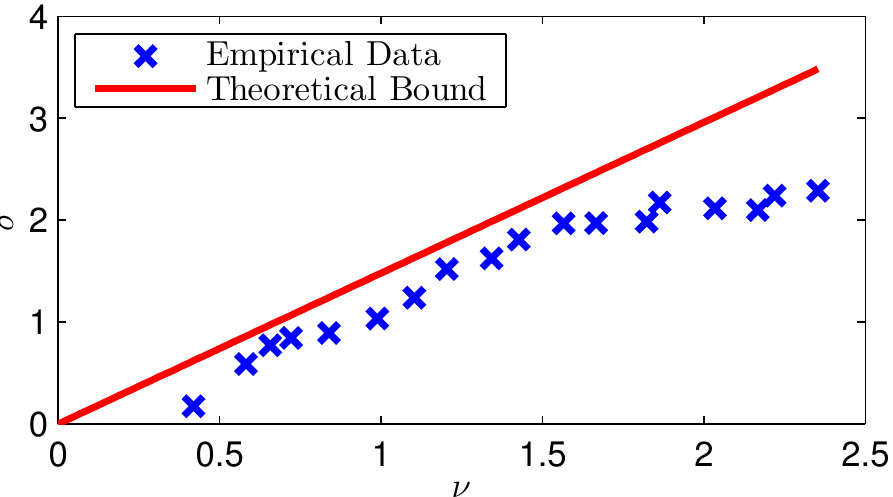}
	\caption{Verification of Proposition 1. The empirical data are generated by first simulating a PSF and measuring the bandwidth $\nu$ and the shift $\sigma$. The empirical data matches with the theory which predicts that $\sigma \le 1.48\nu$.}
	\label{fig: sigma vs nu}
\end{figure}

\noindent \textbf{Remark}. The results in Proposition~\ref{prop: boxcar kernel sigma over nu} can be improved by assuming a Gaussian kernel instead of the boxcar kernel. However, to do so we need to use numerical methods because the Gaussian kernel does not allow closed-form analysis.

\subsection{The Weak Turbulence Case}
We now consider $\beta$ when $D/r_0<1$. In this regime, the random shifting caused by the turbulence is insignificant compared to the diffraction limit of the PSF. In the turbulence literature, this phenomenon is known as that the ``seeing error'' due to the turbulence effects being overridden by the Airy disc \cite{tyson2010principles}. Putting in our terminology, we can say that $\Var[\Theta] \approx 0$ and so $h_{\nu}(x-\Theta)\approx h_{\nu}(x)$.

When $h_{\nu}(x-\Theta)\approx h_{\nu}(x)$, the random shifting is negligible. Thus, we can use \emph{any} number of frames, including just one frame or many frames. The numerical result in Figure \ref{fig:beta_vs_ratio} shows that $\beta$ is at the maximum (i.e., as few frames as possible.) However, a different choice of $\beta$ will perform equally well.

\subsection{Moving Objects}
In the presence of motion, the short-PSF changes from $h_{\nu}(x-\Theta)$ to $h_{\nu}(x(t) - \Theta)$, where $x$ is now a function of time. Assuming constant velocity so that $x'(t) = c$ for some $c \in \R$, we can show that
\begin{align}
&\E\left[\frac{1}{T}\sum_{t=1}^T h(x+ct-\Theta) \right] \notag \\
&\approx \E\left[\frac{1}{T}\sum_{t=1}^T \Big\{h(x -\Theta) + h'(x - \Theta) ct\Big\}  \right] \notag \\
&= 
\underset{\mbox{turbulence}}{\underbrace{(h_{\nu} \circledast p_{\Theta})(x)}} + 
\underset{\mbox{motion}}{\underbrace{\frac{c(T+1)}{2}(h_{\nu}' \circledast p_{\Theta})(x)}}.
\end{align}
Therefore, the perturbation caused by motion is captured by the bias $\frac{c(T+1)}{2}(h_{\nu}' \circledast p_{\Theta})(x)$. If we assume that $|(h_{\nu}' \circledast p_{\Theta})(x)| \le R$ for all $x$, then 
\begin{equation}
\begin{split}
\frac{c(T+1)}{2}(h_{\nu}' \circledast p_{\Theta})(x) &\le B\\
\;\;\;\Longleftrightarrow\;\;\;
T &\le \frac{2B}{cR}-1.
\end{split}
\label{eq: motion bound}
\end{equation}
The right hand side of \eref{eq: motion bound} provides an upper bound on $T$. Let us differentiate the $T$'s for static and dynamic case. We denote the number of frames for the static case as $T_{\mathrm{s}}$, and that for dynamic case as $T_{\mathrm{d}}$. The overall $T$ is the smaller of the two: $T = \min\{T_{\mathrm{s}}, T_{\mathrm{d}}\}$.

Now we can comment on the implication of \eref{eq: motion bound}. When $c = 0$, i.e., static scene, we have $T_{\mathrm{d}} < \infty$. In this case, the actual number of frames $T$ is determined by the turbulence $T_{\mathrm{s}}$. When the velocity $c$ grows, $T_{\mathrm{d}}$ drops. If $T_{\mathrm{d}}$ drops below $T_{\mathrm{s}}$, then $T = T_{\mathrm{d}}$. Thus for very fast moving objects, the number of frames is limited by $T_{\mathrm{d}}$. 

In terms of our algorithm, the factor $\frac{2B}{R}$ serves the role of the spatial search window size $L$. If $L$ is small, then even patches with small motion will be skipped. This shows the two distinctive roles of $\beta$ and $L$. $\beta$ is used to measure the turbulence, whereas $L$ is used to measure the object velocity. The typical range of $\beta$ is given by \fref{fig:beta_vs_ratio}, and the typical value of $L$ is $11 \times 11$ for a $500 \times 500$ image.

\section{Blind Deconvolution}
In this section we look at the blind deconvolution step. Our basic argument is that the blind deconvolution for a turbulence problem does not need to be very complicated because the turbulence PSF is well structured. Our goal is to exploit this structure and to propose a simple but effective blind deconvolution method. 

\subsection{Blind Deconvolution Algorithm}
Recall that the blind deconvolution is applied to the output of the lucky region fusion step. We need a blind deconvolution because we do not know the blur and the latent image. 

The proposed algorithm begins with a standard alternating minimization:
\begin{align}
\vz^{k+1} &= \argmin{\vz}\; \|\vy - \vh^k \circledast \vz\|^2 + \lambda g(\vz) \label{eq: vx subproblem}\\
\vh^{k+1} &= \argmin{\vh}\; \|\vy - \vh \circledast \vz^{k+1} \|^2 + \gamma r(\vh), \label{eq: vv subproblem}
\end{align}
where $\vy$ is the output of the lucky region fusion, $\vh$ is the unknown PSF, and $\vz$ is the latent clean image. The equation for $\vz^{k+1}$ is to update the latent image $\vz$ by using the currently estimated PSF $\vh^{k}$. Similarly, the equation for $\vh^{k+1}$ is to update the PSF by using the currently estimated $\vz^{k+1}$. In these two equations, $g(\cdot)$ and $r(\cdot)$ are regularization functions for $\vz$ and $\vh$, respectively. For performance, $g(\cdot)$ is chosen as the Plug-and-Play prior \cite{chan2017plug} using BM3D as the denoiser. 

At the output of the lucky region fusion step, only the sharpest frames are aggregated to form a diffraction limited image \cite{Milanfar2013}. Thus, the PSF for the blind deconvolution step is a short-PSF plus minor distortions in phase and magnitude (due to uncertainty caused by finite sample averaging and optical flow). To encapsulate the features of these distorted short-PSFs, we adopt a simple linear model by writing $\vh$ as a set of basis vectors $\vh = \sum_{i=1}^m w_i \vu_i$, where $\{\vu_i\}_{i=1}^m$ are to be trained. By incorporating this into the algorithm, we replace \eref{eq: vv subproblem} by two steps:
\begin{align}
\vw^{k+1} &= \argmin{\vw} \; \left\|\vy - \left(\sum_{i=1}^m w_i \vu_i\right) \circledast \vz^{k} \right\|^2 + \gamma r(\vw)  \label{eq: vw subproblem 1} \\
\vh^{k+1} &= \sum_{i=1}^p w_i^{k+1} \vu_i,
\label{eq: vw subproblem 2}
\end{align}
where $r(\vw)$ is a regularization function on $\vw$. The overall blind deconvolution algorithm now consists of three steps: updating the image estimate \eref{eq: vx subproblem}, followed by estimating the weight \eref{eq: vw subproblem 1} and constructing the PSF estimate \eref{eq: vw subproblem 2}. The algorithm repeats until convergence. \footnote{Beyond these major steps, we adopt two standard practice. When estimating the PSF in \eref{eq: vw subproblem 1}, we replace $\vy$ and $\vx$ by their gradients $\nabla \vy$ and $\nabla \vx$ as suggested by \cite{Shan2018}. For large images, we use a coarse-to-fine propagation by first estimating the PSF at coarse scale, and then progressively improve the resolution \cite{Xu_2018}.} Like the reference generation step, we emphasize that the proposed blind deconvolution method is extremely simple. However, we will show that by carefully choosing $\{\vu_i\}_{i=1}^m$ and $r(\vw)$, this method is sufficient to produce good results.

\subsection{Basis Functions}
Given that we are working on turbulence, one straight-forward approach to construct the basis functions is to simulate a large number of training PSFs and learn the principal components. To this end, we simulate 40,000 short-PSFs, each of size $15 \times 15$. These 40K short-PSFs cover a wide range of $C_n^2$ from $5\times 10^{-17}$ to $5\times 10^{-16}$, which is sufficient to model mild to medium turbulence. After generating these training samples we learn the principal components using the standard PCA. 

\fref{fig:psf_bases} shows a few snapshots of the generated short-PSFs at $C_n^2 = 10^{-17}$. We see that in general the short-PSFs are highly structured. All PSF shown have a similar mean and small distortions around the center. If we look at the basis functions, we see that the basis functions are nothing but a set of directional filters. These directional filters are orthogonal.

\begin{figure}[h]
	\centering
	\begin{tabular}{c}
		\includegraphics[width=0.95\linewidth]{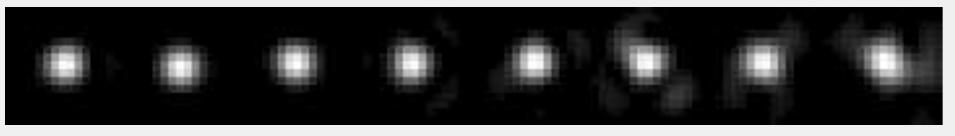}\\
		\includegraphics[width=0.95\linewidth]{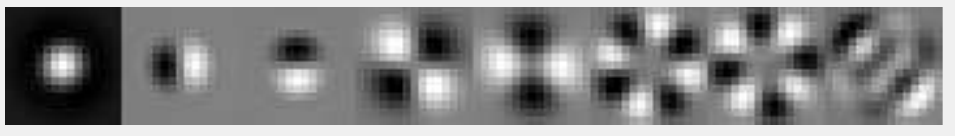}
	\end{tabular}
	
	\vspace{-2ex}
	\caption{[Top] Simulated short-PSFs, displayed according to increasing turbulence level from left to right. [Bottom] The first 8 principle components.}
	\label{fig:psf_bases}
	\vspace{-2ex}
\end{figure}

\subsection{Prior Distribution of the Weights}
Once the bases $\{\vu_1,\ldots,\vu_m\}$ are defined, we can examine the weight $\vw = [w_1,\ldots,w_m]^T$. To this end, we conduct an experiment to approximate a simulated short-PSF using as few bases as possible. This leads to an $\ell_0$ optimization by minimizing the number of active bases while bounding the error by $\tau$:
\begin{equation}
\widehat{\vw} = \argmin{\vw}\;\; \|\vw\|_0 \;\; \mbox{s.t.} \;\; \left\|\vh_{\mathrm{sim}} - \sum_{i=1}^m w_i \vu_i \right\|^2 \le \tau.
\label{eq: w optimizaiton}
\end{equation}
Here, $\vh_{\mathrm{sim}}$ denotes a simulated training short-PSF. We repeat the experiment for 100,000 different $\vh_{\mathrm{sim}}$, and we plot the histogram of each weight $\widehat{w}_i$ over these 100,000 trials. The empirical histogram is shown in \fref{fig:prior distribution}. 

\begin{figure}[h]
	\centering
	\includegraphics[width=\linewidth]{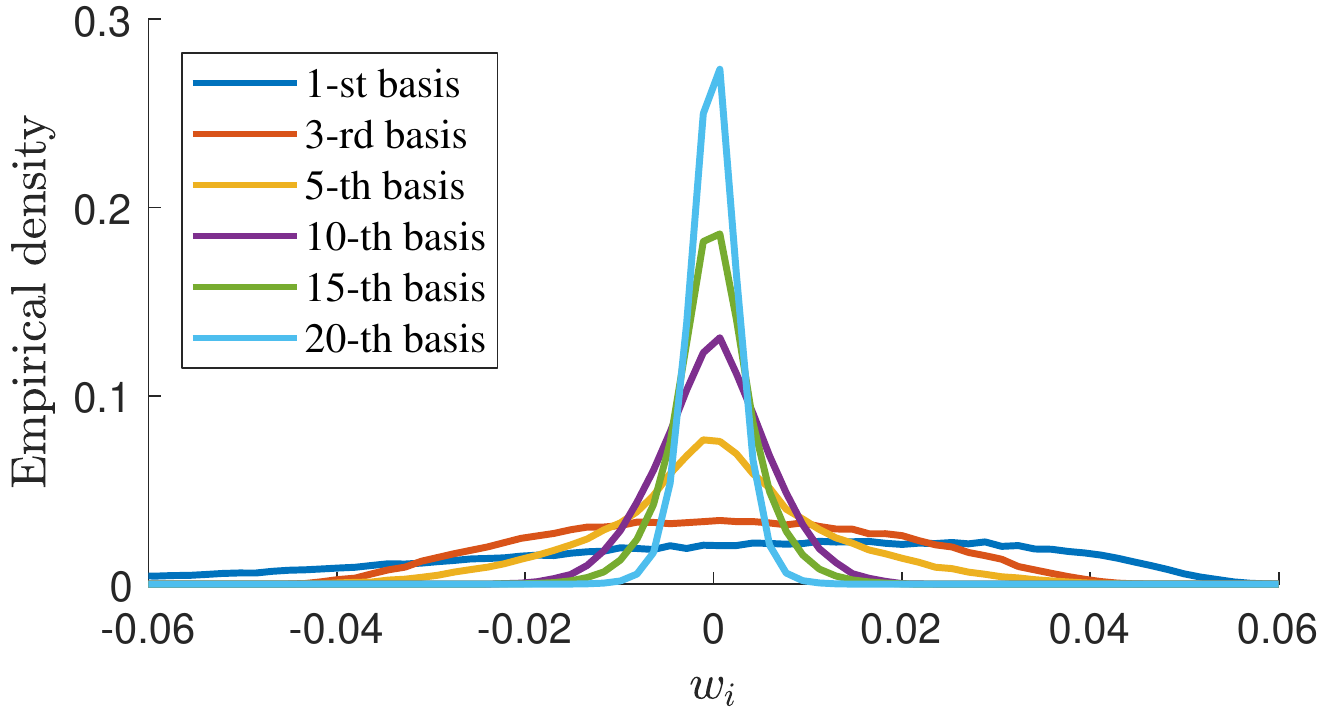}
	\caption{Empirical distributions of the weights by solving \eref{eq: w optimizaiton} with 100,000 different simulated PSF $\vh_{\mathrm{sim}}$. The empirical distribution demonstrates a double-sided exponential distribution.}
	\label{fig:prior distribution}
\end{figure}

By inspecting the histograms in \fref{fig:prior distribution}, we notice a double-sided exponential distribution. This suggests a product of exponential distributions for $\vw$:
\begin{equation}
p(\vw) = \exp\left\{-\sum_{i=1}^m \frac{|w_i|}{d_i}\right\},
\end{equation}
where $d_i$ measures the standard deviation of the individual exponential. Consequently we can define the regularization by taking the negative log: $r(\vw) = \sum_{i=1}^m \frac{|w_i|}{d_i}$.

\section{Experimental Results}

\subsection{Ablation Study of Reference Generation}
We first consider an ablation study of the reference generation method. To allow quantitative comparison, we simulate a 100-frame static-scene turbulence-distorted sequence at several different $C_n^2$'s. We use two metrics in this experiment. The first metric is the PSNR between the generated reference and the ideal short-exposure image. The ideal short-exposure image is generated by filtering the image with a short-PSF. This short-PSF is obtained by centroiding and averaging the simulated PSFs. The second metric is the PSNR between the final restoration result and the ground truth. That is, we fix the components of the pipeline except the reference generation step. The goal is to test the influence of the reference image. 

The results of this experiment are shown in Table~\ref{tab:reference 1}.  The first half of the table shows that the proposed reference generation method produces a reference that is closer to the ideal short-exposure image than the conventional temporal averaging. \fref{fig:ref_frame_comp} shows a visual comparison. The second half of the table shows that the influence of the reference frame is significant especially for large $C_n^2$. One reason is that as $C_n^2$ grows, the turbulence distortion becomes stronger and so the reference is over-smoothed. Feeding this over-smoothed reference to optical flow and lucky frame will degrade the performance significantly.

\begin{table}[h]
	\centering
	\begin{tabular}{ccccc}
		\hline
		\hline
		$C_n^2$ ($\times 10^{-17}$) & $5$ & $8$ & $20$ & $50$  \\
		\hline
		\multicolumn{5}{c}{(PSNR between reference and ideal short exposure)}\\
		Tmp Avg & 39.82 & 38.67 & 33.48 & 28.57 \\
		Ours    & \textbf{40.23} & \textbf{39.60} & \textbf{35.85} & \textbf{31.26} \\
		\hline
		\multicolumn{5}{c}{(Overall PSNR by changing reference in the pipeline)}\\
		Tmp Avg & 27.72 & 27.47 & 25.79 & 23.69 \\
		Ours    & \textbf{27.77} & \textbf{27.67} & \textbf{27.29} & \textbf{26.47} \\ 
		\hline
	\end{tabular}
	\caption{[Top] Comparing reference with respect to the ideal short exposure image. [Bottom] Ablation study by changing the reference in the pipeline.}
	\label{tab:reference 1}
\end{table}

In addition to the synthetic experiment, we also test on real moving sequences shown in \fref{fig:ref_frame_comp 2}. As the person in the sequence moves, temporal averaging will blur out the person. In contrast, the proposed method can retain the person while stabilizing the background.

\begin{figure}[h]
	\centering
	\begin{tabular}{ccc}
		\includegraphics[height=15ex]{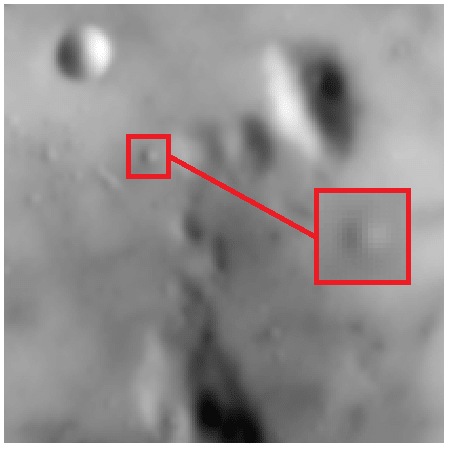} & 
		\hspace{-2ex} 
		\includegraphics[height=15ex]{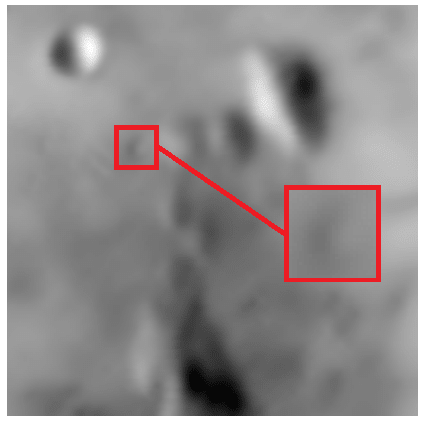} & 
		\hspace{-2ex}
		\includegraphics[height=15ex]{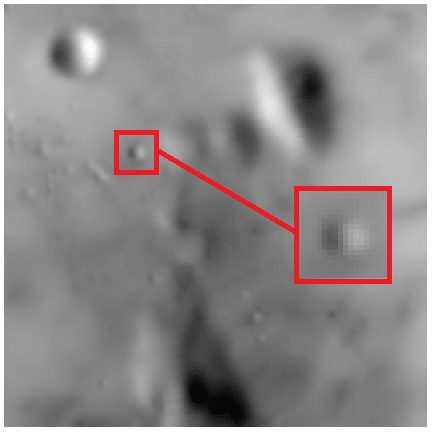}\\
		\small{Ideal Short Exp.} & \small{Tmp Avg} & \hspace{-2ex} 
		\hspace{-2ex} \small{Ours}
	\end{tabular}
	\vspace{-1ex}
	\caption{Synthetic experiment by simulating turbulence distorted images. This figure compares the generated reference with respect to the ideal short exposure image. }
	\label{fig:ref_frame_comp}
\end{figure}

\begin{figure}[h]
	\centering
	\begin{tabular}{ccc}
		\hspace{-2ex}
		\includegraphics[trim={3cm 0 2cm 0},clip, height=15ex]{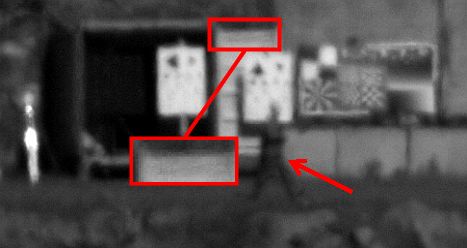} &
		\hspace{-2ex}
		\includegraphics[trim={3cm 0 2cm 0},clip, height=15ex]{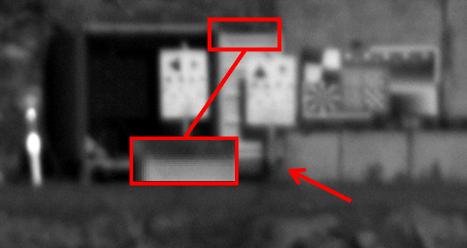} &
		\hspace{-2ex}
		\includegraphics[trim={3cm 0 2cm 0},clip, height=15ex]{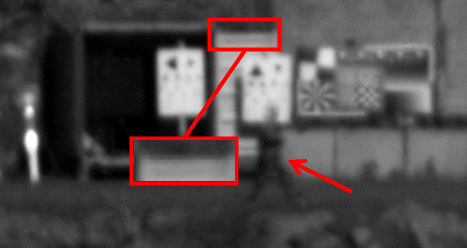} \\
		\hspace{-2ex} \small{(a) Raw input} &
		\hspace{-2ex} \small{(b) Tmp Avg} &
		\hspace{-2ex} \small{(c) Ours}
	\end{tabular}
	\caption{[Left] Raw input. [Middle] Temporal averaging: The man is blurred over 100 frames. [Right] Ours: The man remains in the image while the background is stabilized.}
	\label{fig:ref_frame_comp 2}
	\vspace{-2ex}
\end{figure}

\subsection{Ablation Study of Blind Deconvolution}
The second experiment is an ablation study to test the effectiveness of the blind deconvolution algorithm. The competing methods we consider include a classical  method by Shan et al. \cite{Shan2018}, and two very recent deep neural networks by Chakrabarti \cite{Chakrabarti2016} and Xu et al. \cite{Xu_2018}. We downloaded the original implementations of these methods and used the pre-trained models. Internal parameters (for \cite{Shan2018}) are fined tuned to maximize the performance. For our proposed method, we fix $\lambda = 0.05$ and $\gamma = 1\times10^{-4}$ for all experiments reported in this paper.

We use the 24 images in the Kodak image dataset for experiment. Every image is blurred with 50 random short-PSFs under 5 different turbulence levels with $C_n^2$ from $5\times 10^{-17}$ to $5\times 10^{-16}$. Thus each method at every turbulence level consists of 1200 testing scenarios. Since this is a simulated experiment, we have access to the ground truths to compute the PSNR values. The average PSNR over the 50 random PSFs are shown in Table \ref{tab:PSNR}.

For visual quality comparison, we demonstrate a result with the USAF resolution chart using turbulence at a level of $C_n^2 = 2.5 \times 10^{-16}$. It can be seen in \fref{fig:USAFresult} that the result using our method contains the least amount of artifacts. The estimated PSF is also more structured and interpretable than the deep neural networks \cite{Chakrabarti2016, Xu_2018}.

\begin{table}[h]
	\centering
	\begin{tabular}{cccccc}
		\hline
		\hline
		$C_n^2$ ($\times 10^{-17}$) & $5$ & $15$ & $25$ & $35$ & $45$ \\
		\hline
		Shan et al. \cite{Shan2018} & 27.25 & 26.52 & 26.59 & 25.91 & 23.89 \\
		Chakrabarti \cite{Chakrabarti2016} & 27.32 & 26.69 & 26.31 & 24.80 & 24.49 \\         
		Xu et al. \cite{Xu_2018} & 26.68 & 26.08 & 26.04 & 24.45 & 24.21 \\         
		Ours & \textbf{27.69} & \textbf{27.02} & \textbf{27.00} & \textbf{26.24} & \textbf{24.58}\\
		\hline
	\end{tabular}
	\caption{Ablation study of blind deconvolution algorithms. The PSNRs are averaged over 24 images of the Kodak image dataset, and every image is tested 50 times for different random short-PSFs. Thus every data point reported in this table is an average over 1200 testing scenarios.}
	\label{tab:PSNR}
\end{table}

\begin{figure}[h]
	\centering
	\begin{tabular}{ccc}
		\includegraphics[width=0.3\linewidth]{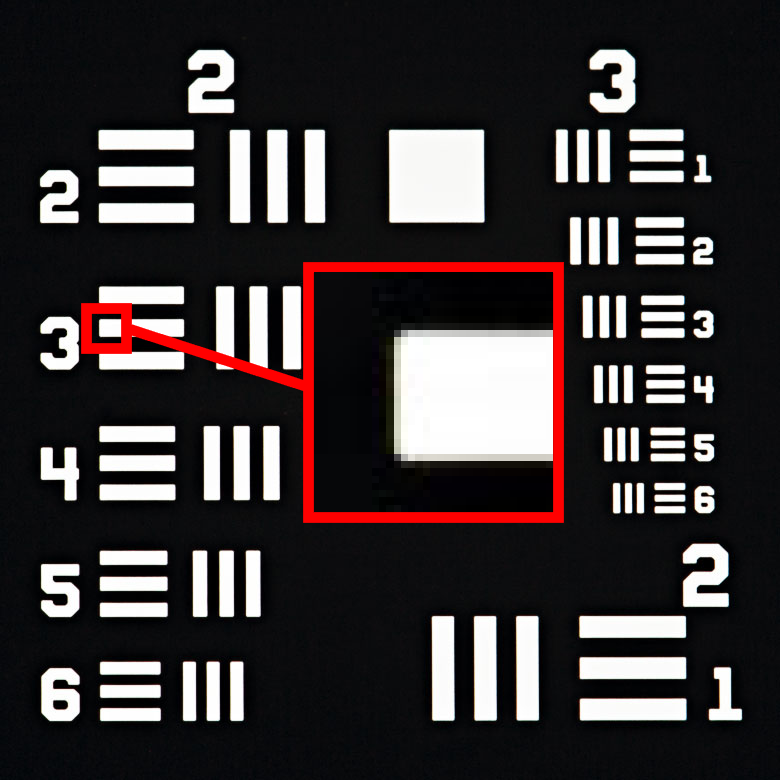}   & \hspace{-2ex} \includegraphics[width=0.3\linewidth]{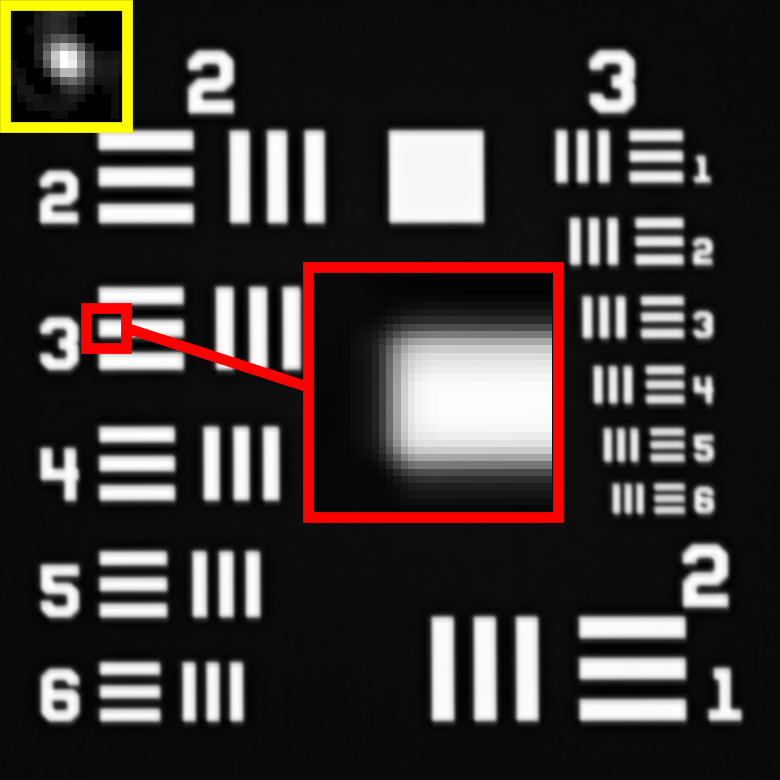} & \hspace{-2ex}
		\includegraphics[width=0.3\linewidth]{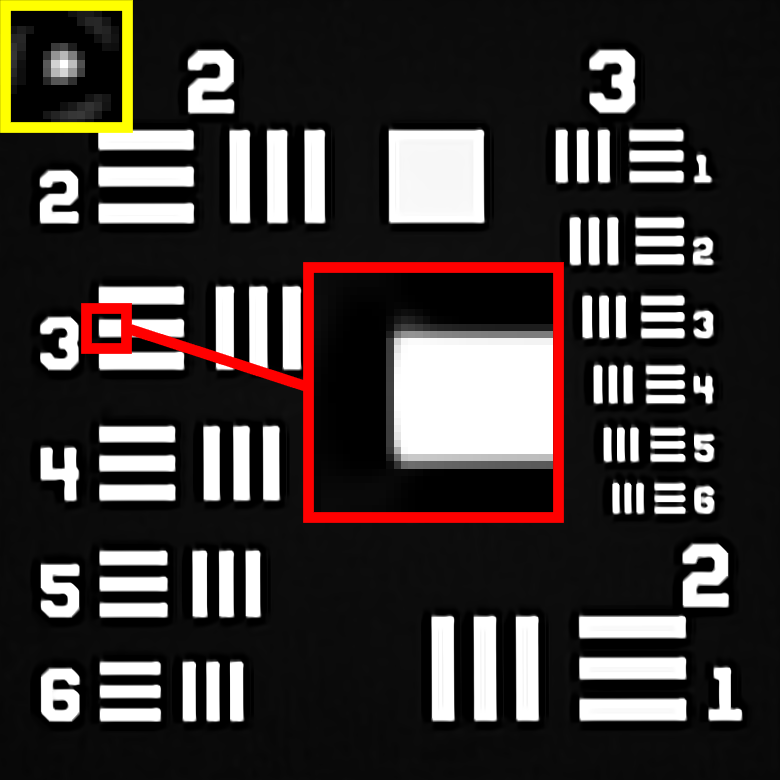} \\
		\small{(a) Ground Truth} & 
		\hspace{-2ex} \small{(b) Lucky Region} &
		\hspace{-2ex} \small{(c) Ours: 24.78 dB} \\
		\includegraphics[width=0.3\linewidth]{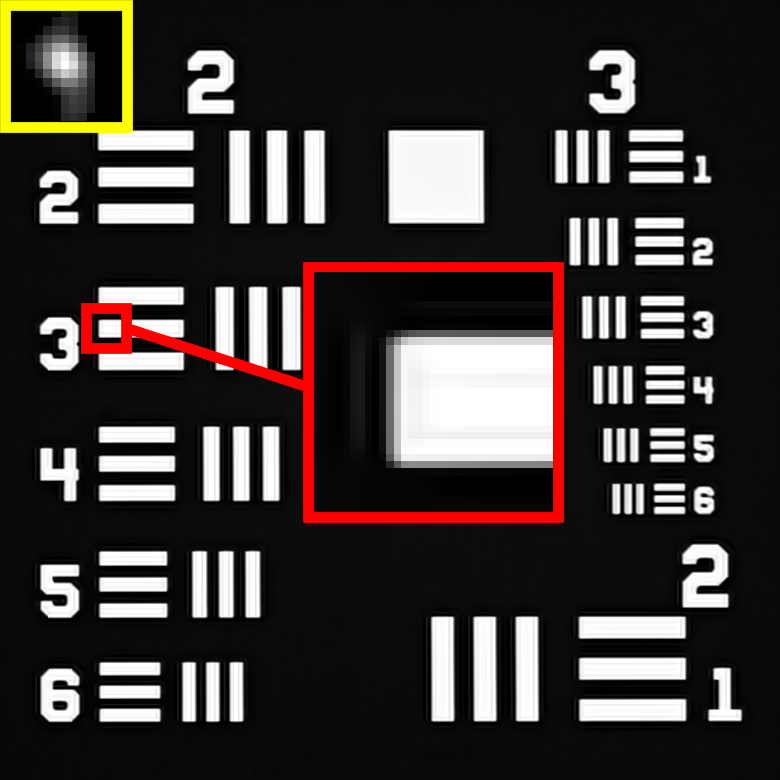} & \hspace{-2ex}
		\includegraphics[width=0.3\linewidth]{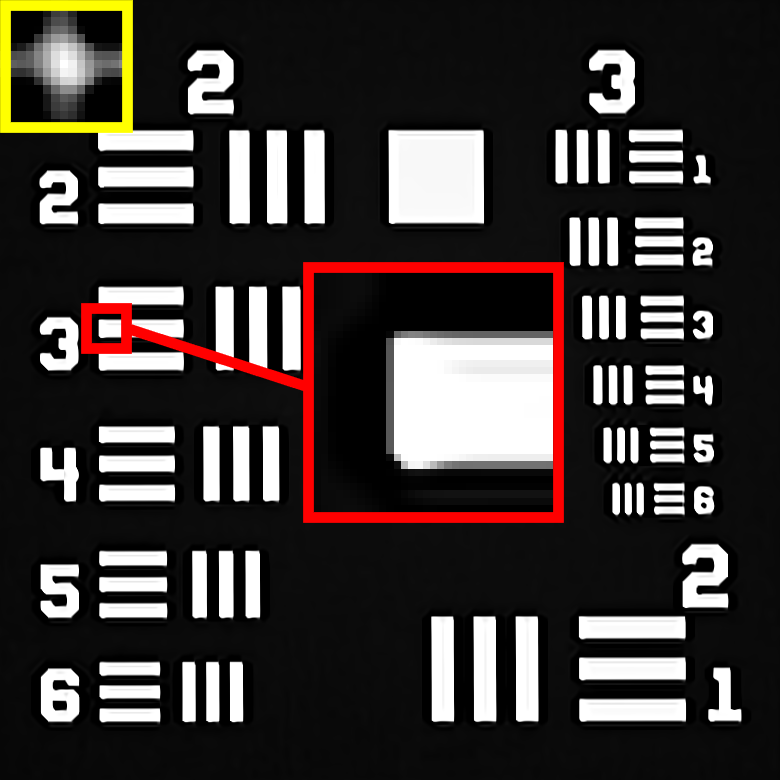} & \hspace{-2ex}
		\includegraphics[width=0.3\linewidth]{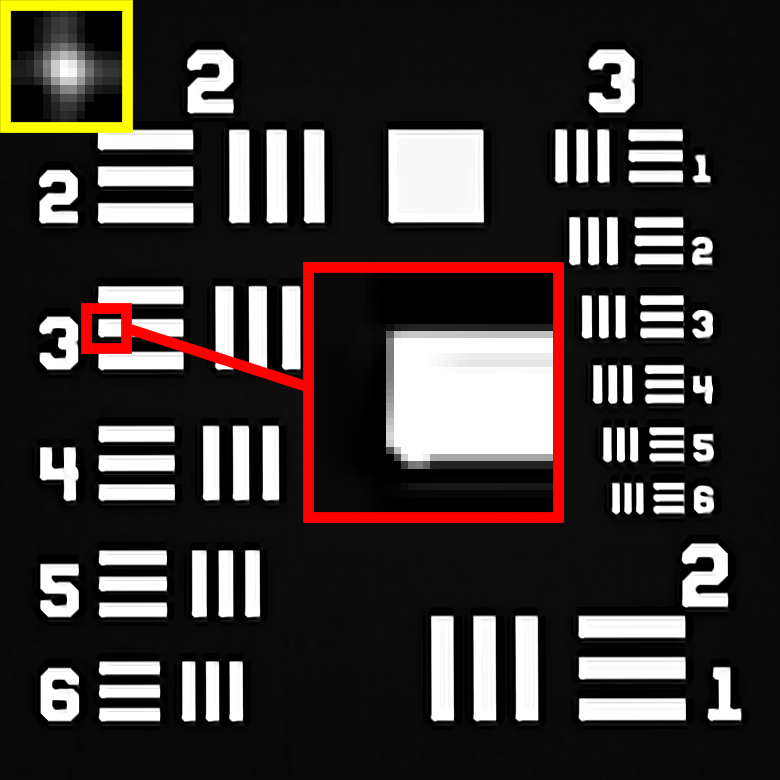} \\
		\small{(d) \cite{Shan2018}: 23.34 dB} &
		\hspace{-2ex} \small{(e) \cite{Chakrabarti2016}: 22.99 dB} & 
		\hspace{-2ex} \small{(f) \cite{Xu_2018}: 23.75 dB} \\
	\end{tabular}
	\caption{Experiments with USAF resolution chart. Corresponding PSFs are shown in the uppper-left corner. Zoom in for better view. }
	\label{fig:USAFresult}
	\vspace{-2ex}
\end{figure}

\subsection{Overall Algorithm on Real Data}
The third experiment is to compare the proposed pipeline with other turbulence restoration methods. Since we have reported simulated results in the previous two subsections, here we report performance on real data.

The first two sets of comparisons are the \texttt{Building} in \fref{fig: exp house} and the \texttt{Chimney} in \fref{fig:real_static} \cite{Hirsch2010}. We compare with three methods: Sobolev gradient flow by Lou et al. \cite{Lou2013}, B-spline + deblurring by Zhu and Milanfar \cite{Milanfar2013}, and a wavelet enhancement method by Anantrasirichai et al. \cite{Anantrasirichai2013}. The implementation of the methods are provided by the original authors, and the internal parameters are tuned according to the best of our knowledge. There are a few other methods discussed in the introduction, but we were not able to obtain the reproducible source codes. 

The next two sets of comparison are the \texttt{Man} sequence as shown in \fref{fig:real_static}, and the \texttt{Car} sequence as shown in \fref{fig: car moving}. The walking man's motion is highly horizontal, hence it is easily ``washed out'' by existing methods. In contrast, the proposed method is able to preserve the man and generate a reliable reference. The car sequence is a considerably harder problem, as the car is moving towards the viewer. The proposed method, however, is able to stabilize the background fences while sharpens the licence plate.

\begin{figure}[t]
	\centering
	\begin{tabular}{ccc}
		\includegraphics[width=0.3\linewidth]{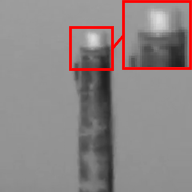}&
		\hspace{-2ex}\includegraphics[width=0.3\linewidth]{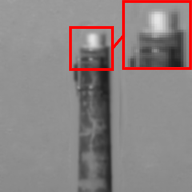}&
		\hspace{-2ex}\includegraphics[width=0.3\linewidth]{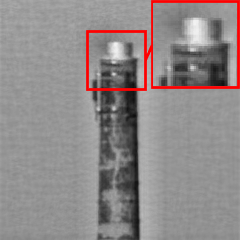}\\
		\small{(a) Input} & \hspace{-2ex} \small{(b) Lucky Region} & \hspace{-2ex} \small{(c) Anan. et al. \cite{Anantrasirichai2013}}\\
		\includegraphics[width=0.3\linewidth]{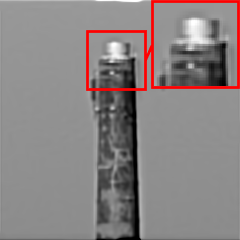}&
		\hspace{-2ex}\includegraphics[width=0.3\linewidth]{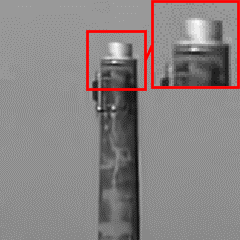}&
		\hspace{-2ex}\includegraphics[width=0.3\linewidth]{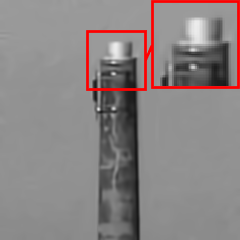}\\
		\small{(d) Lou et al. \cite{Lou2013}} & \hspace{-2ex} \small{(e) Zhu et al. \cite{Milanfar2013}} & \hspace{-2ex} \small{(f) Ours}
	\end{tabular}
	\vspace{-1ex}
	\caption{\texttt{Chimney}: A real static scene. The proposed method produces the sharpest recovery with minimal artifacts. Zoom-in for better visualization.}
	\label{fig:real_static}
	\vspace{-1ex}
\end{figure}

\begin{figure*}[t]
	\centering
	\begin{tabular}{cc}
		\hspace{-2ex} 
		\includegraphics[width=0.48\linewidth]{./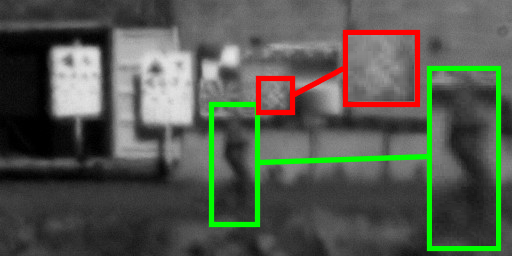} & 
		\hspace{-2ex} 
		\includegraphics[width=0.48\linewidth]{./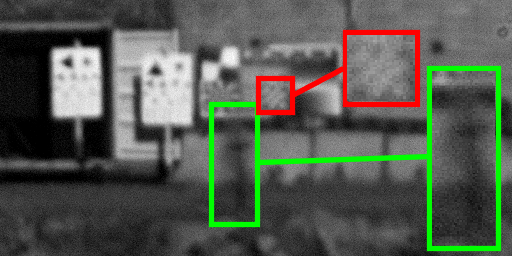}\\
		\hspace{-2ex} \small{(a) Raw Input} & 
		\hspace{-2ex} \small{(b) Lou et al. \cite{Lou2013}}\\
		\hspace{-2ex} 
		\includegraphics[width=0.48\linewidth]{./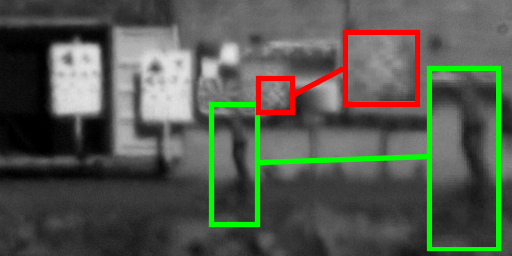} & 
		\hspace{-2ex} 
		\includegraphics[width=0.48\linewidth]{./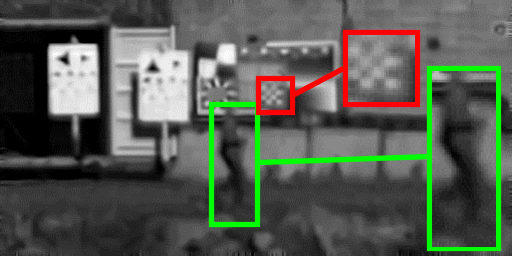}\\        
		\hspace{-2ex}\small{(c) Zhu et al. \cite{Milanfar2013}} & 
		\hspace{-2ex}\small{(d) Ours}
	\end{tabular}
	\vspace{-1ex}
	\caption{\texttt{Man}: A real dynamic scene. The proposed method preserves the moving object while stabilizing the background turbulence. Zoom-in for better visualization.}
	\label{fig:real_moving}
	\vspace{-1ex}
\end{figure*}

\begin{figure}[h]
	\centering
	\includegraphics[width=\linewidth]{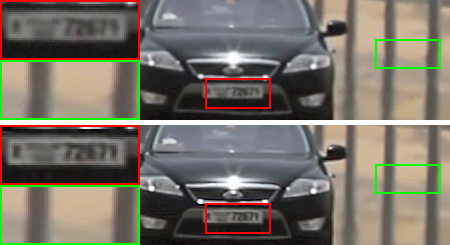}
	\vspace{-3ex}
	\caption{\texttt{Car}: A real dynamic scene. [Top] Raw input; Notice the blur and vibration of the background fences. [Bottom] Recovered by our proposed method. Zoom-in for better visualization.}
	\label{fig: car moving}
	\vspace{-2ex}
\end{figure}

\section{Conclusion}
We studied the image restoration pipeline of an atmospheric turbulence problem by grounding the design parameters on the physics of the turbulence. We showed that the non-local weight parameter $\beta$ should scale with the turbulence strength $D/r_0$. We proved that the ratio between the shift of the PSF $\sigma$ and the PSF bandwidth $\vnu$ is upper bounded by a constant. We demonstrated how a simple prior can outperform state-of-the-art blind deconvolution algorithms in the turbulence pipeline.


%

\appendices
\section{Proofs}
In this section, we clarify some theoretical details in the main paper, and prove theorem 1 and proposition \ref{prop: boxcar kernel sigma over nu}. We only provide a sketch of the proof of theorem \ref{maximum attained for lipschitz and continuous} here due to space limitations, and leave the complete proof of it to the supplementary notes.

For convenience, we denote in what follows the PDF and CDF of a Normal distribution with zero mean and $\sigma^2$ variance as $p_{\Theta_{\sigma}}$ and $P_{\Theta_{\sigma}}$ respectively.

Recall in definition \ref{def:short-psf from smoothing kernel} we restrict the short-PSFs considered to be those defined with smoothing kernels. We provide the definition of the latter here.

\begin{definition}[Smoothing Kernels]
We say a function $K:\mathbb{R}\to\mathbb{R}$ is a smoothing kernel if
\begin{enumerate}
    \item $0\le K(x) \le M$ everywhere on $\mathbb{R}$;
    \item $K$ is an even function;
    \item $\int_{\mathbb{R}}K(x)dx < \infty$.
\end{enumerate}
\label{def:appendix smoothing kernel}
\end{definition}

\noindent\textbf{Remark}. Our definition is inspired from, but more general than smoothing kernels studied in the theory of kernel density estimation from the probability and statistics literature, as this allows us to model more generic short-PSFs. For an introduction to the theory of kernel density estimation, see, for example, \cite{wasserman2004statistics}.

\begin{proof}[Proof of Theorem 1]
From Bernstein's inequality stated in lemma \ref{lemma: bernstein ineq} below, we have:
\begin{equation} 
\begin{aligned}
    &\mathbb{P}\left(\left| \widetilde{h}_{\nu}(x) - (h_{\nu} \circledast p_{\Theta})(x) \right| > \epsilon  \right) \\
    &\quad \le \, 2\exp\left\{-\frac{\epsilon^2 T}{2 V_{\nu}(x,\sigma)+2\nu^{-1}M\epsilon/3}\right\}.
\end{aligned}
\end{equation}
Now apply $\sup_{x\in\mathbb{R}}$ to both sides of the inequality, the left hand side is in the form we desire, and for the right hand side, based on the relationship between continuous increasing functions and the supremum operation, we have the desired expression:
\begin{equation}
\begin{aligned}
& \sup_{x\in\mathbb{R}}2\exp\left\{-  \frac{T \epsilon^2}{2 V_{\nu}(x,\sigma) + 2M\nu^{-1}\epsilon/3}\right\} \\
&\quad = \, 2\exp\left\{-  \frac{T \epsilon^2}{2 \sup_{x\in\mathbb{R}}V_{\nu}(x,\sigma) + 2M\nu^{-1}\epsilon/3}\right\}.
\end{aligned}
\end{equation}
\end{proof}

\begin{lemma}(Bernstein's Inequality \cite{bernstein1946})

Let $\{X_j\}_{j=1}^n$ be a collection of independent random variables. Assume $\mathbb{P}(|X_j|\le M)=1$ for every $j$, and let $\Var[X_j]$ be finite for every $j$. Then for any $\epsilon>0$,
\begin{equation}
\begin{aligned}
    & \mathbb{P}\left(\left|\frac{1}{n}\sum_{j=1}^n (X_j - \E[X_j])\right|>\epsilon\right) \\
    \le & 2\exp\left\{-\frac{n\epsilon^2}{2\frac{1}{n}\sum_{j=1}^n \Var[X_j] + \frac{2}{3}M\epsilon}\right\}.
\end{aligned}
\end{equation}
If the random variables are further assumed to be identically distributed, and denoting $\E[X_j]=\mu$ and $\Var[X_j]=\sigma^2$, then the above inequality simplifies to the following:
\begin{equation}
    \mathbb{P}\left(\left|\frac{1}{n}\sum_{j=1}^n X_j - \mu\right|>\epsilon\right) \le 2\exp\left\{-\frac{n\epsilon^2}{2 \sigma^2 + \frac{2}{3}M\epsilon}\right\}.
\end{equation}
\label{lemma: bernstein ineq}
\end{lemma}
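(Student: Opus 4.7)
The plan is to establish Bernstein's inequality via the classical Chernoff--moment-generating-function (MGF) argument and then specialize to the i.i.d.\ case. First I would reduce to centered variables by defining $Y_j = X_j - \E[X_j]$, so that $\E[Y_j] = 0$, $\Var[Y_j] = \Var[X_j]$, and $|Y_j| \le 2M$ almost surely. Writing $S_n = \sum_{j=1}^n Y_j$, it suffices to establish the one-sided tail estimate $\mathbb{P}(S_n > n\epsilon) \le \exp\bigl(-n\epsilon^2/(2\overline{\sigma}^2 + (2/3)M\epsilon)\bigr)$ with $\overline{\sigma}^2 = (1/n)\sum_j \Var[X_j]$; the two-sided inequality in the lemma then follows by applying the same argument to $\{-Y_j\}_{j=1}^n$ and union-bounding, which contributes the factor of $2$.

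For the one-sided bound I would apply Markov's inequality to $e^{tS_n}$ for $t>0$ and factor across independence:
\[
\mathbb{P}(S_n > n\epsilon) \le e^{-tn\epsilon} \prod_{j=1}^n \E[e^{tY_j}].
\]
The crux of the argument is a per-coordinate MGF estimate of the form $\E[e^{tY_j}] \le \exp\bigl(t^2 \Var[Y_j]/(2(1 - Mt/3))\bigr)$, valid for $t \in (0, 3/M)$. This is obtained by Taylor expanding $e^{tY_j} = 1 + tY_j + \sum_{k\ge 2}(tY_j)^k/k!$, observing that the linear term vanishes because $\E[Y_j]=0$, and bounding the higher moments by $\E[|Y_j|^k] \le M^{k-2}\Var[Y_j]$ for $k\ge 2$ (using $|Y_j|^k = Y_j^2 \cdot |Y_j|^{k-2}$ together with the almost-sure bound). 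Summing the resulting geometric-type series and applying $1+u\le e^u$ then delivers the stated MGF bound.

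The last step is to optimize the Chernoff parameter $t$. Substituting the MGF bound yields
\[
\mathbb{P}(S_n > n\epsilon) \le \exp\!\left(-tn\epsilon + \frac{t^2 n \overline{\sigma}^2/2}{1 - Mt/3}\right),
\]
and the minimizer (after relaxing the exponent to a quadratic-plus-linear form in $t$) is $t^\star = \epsilon/(\overline{\sigma}^2 + M\epsilon/3)$, which produces exactly the Bernstein-type bound in the statement. The i.i.d.\ specialization is then immediate: $\overline{\sigma}^2 = \sigma^2$ and $S_n/n - \mu$ coincides with the centered empirical average. I expect the main obstacle to be the careful accounting of constants: one has to track whether the almost-sure bound is being applied to $|X_j|$ or to the centered $|Y_j|$, since the slack propagates into the coefficient multiplying $M\epsilon$ in the denominator, and it is this bookkeeping (rather than any conceptual subtlety) that determines the exact form of the exponent stated in the lemma.
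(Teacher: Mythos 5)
The first thing to note is that the paper does not prove this lemma at all: it is quoted as a classical result, with a citation to Bernstein, and is used as a black box in the proof of Theorem 1. So the comparison here is between your argument and the classical theorem itself. Your overall route --- centering, a Chernoff bound, a per-variable Bernstein-type MGF estimate, then optimizing $t$ --- is the standard proof, and your optimization step (with $t^\star = \epsilon/(\overline{\sigma}^2 + M\epsilon/3)$) does reproduce the stated exponent. The genuine problem is the step you defer as ``bookkeeping'': after centering you only know $|Y_j|\le 2M$, so the moment bound you invoke, $\E[|Y_j|^k] \le M^{k-2}\Var[X_j]$, is unjustified; what actually follows from the almost-sure bound is $\E[|Y_j|^k] \le (2M)^{k-2}\Var[X_j]$. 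Carrying the correct constant through your series and the Chernoff optimization yields
\begin{equation*}
\mathbb{P}\left(\left|\frac{1}{n}\sum_{j=1}^n \left(X_j-\E[X_j]\right)\right| > \epsilon\right) \le 2\exp\left\{-\frac{n\epsilon^2}{2\frac{1}{n}\sum_{j=1}^n\Var[X_j] + \frac{4}{3}M\epsilon}\right\},
\end{equation*}
which is strictly weaker than the statement you were asked to prove.

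Moreover, this is not a gap that more careful accounting can close: under the hypothesis as stated --- $\mathbb{P}(|X_j|\le M)=1$ for the \emph{uncentered} variables --- the conclusion with denominator $2\overline{\sigma}^2 + \frac{2}{3}M\epsilon$ is actually false. Take $X_j$ i.i.d.\ with $\mathbb{P}(X_j=1)=0.1$ and $\mathbb{P}(X_j=-1)=0.9$, so $M=1$, $\mu=-0.8$, $\sigma^2=0.36$, and take $\epsilon=0.1$. The event $\frac{1}{n}\sum_{j=1}^n X_j - \mu > 0.1$ is exactly $\{\mathrm{Bin}(n,0.1) > 0.15n\}$, whose probability decays with exponential rate $D(0.15\,\|\,0.1)=0.15\ln 1.5 + 0.85\ln\frac{17}{18} \approx 0.01224$, whereas the claimed bound decays with rate $\epsilon^2/(2\sigma^2+\frac{2}{3} M\epsilon)\approx 0.01271$; since the true rate is strictly smaller, the claimed inequality is violated for all sufficiently large $n$. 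The classical form of Bernstein's inequality requires the bound $M$ to hold for the \emph{centered} variables, i.e.\ $|X_j - \E[X_j]|\le M$ almost surely; under that hypothesis your argument goes through essentially verbatim and gives the stated constant. It is also worth noting why the paper's application remains safe despite the loosely stated hypothesis: in the proof of Theorem 1 the random variables are $h_{\nu}(x-\theta_t)$, which are nonnegative and bounded above by $\nu^{-1}M$, and for nonnegative variables bounded by $B$ the centered versions do satisfy $|Y_j|\le B$, so the $\frac{2}{3} B \epsilon$ form of the denominator is legitimate in that setting.
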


\begin{proof}[Proof of Proposition \ref{prop: boxcar kernel sigma over nu}]
Recall by definition that $h_{\nu}(x)=\frac{1}{\nu}K(\frac{x}{\nu})$, so $h_{\nu}(x)=\frac{1}{2\nu}$ when $x\in[-\nu,\nu]$, and vanishes everywhere else. We examine $V_{\nu}(x=0,\sigma)$:
\begin{equation*}
\begin{aligned}
    V_{\nu}(0,\sigma) 
    = & \int_{\mathbb{R}} h_{\nu}^2(0-\theta) p_{\Theta_{\sigma}}(\theta)d\theta - \left(\int_{\mathbb{R}} h_{\nu}(0-\theta) p_{\Theta_{\sigma}}(\theta)d\theta \right)^2 \\
    = & \int_{-\nu}^{\nu} \frac{1}{4\nu^2} p_{\Theta_{\sigma}}(\theta)d\theta - \left(\int_{-\nu}^{\nu} \frac{1}{2\nu} p_{\Theta_{\sigma}}(\theta)d\theta\right)^2 \\
    = & \frac{1}{4\nu^2} [P_{\Theta_{\sigma}}(\nu) - P_{\Theta_{\sigma}}(-\nu)] - \\
    &\quad \frac{1}{4\nu^2}[P_{\Theta_{\sigma}}(\nu) - P_{\Theta_{\sigma}}(-\nu)]^2
\end{aligned}
\end{equation*}
We use the common notation of $\Phi$ and $\phi$ as the cdf and pdf of the standard normal distribution respectively. Then the above expression can be rewritten as
\begin{equation*}
\begin{aligned}
    V_{\nu}(0,\sigma)
    = & \frac{1}{4\nu^2} \left[\Phi\left(\frac{\nu}{\sigma}\right) - \Phi\left(\frac{-\nu}{\sigma}\right)\right] - \\
    &\quad \frac{1}{4\nu^2}\left[\Phi\left(\frac{\nu}{\sigma}\right) - \Phi\left(\frac{-\nu}{\sigma}\right)\right]^2 \\
    = &  \frac{1}{4\nu^2}  \left[2\Phi\left(\frac{\nu}{\sigma}\right) - 1\right] - \frac{1}{4\nu^2}\left[2\Phi\left(\frac{\nu}{\sigma}\right) - 1\right]^2.
\end{aligned}
\end{equation*}

We wish to find the interval of $\sigma$ inside $[0,\infty)$ on which $V_{\nu}$ is increasing. We note a preliminary fact that $V_{\nu}(x=0,\sigma=0)=0$, and nonnegative for $\sigma\in(0,\infty)$. 

Take derivative of the above expression with respect to $\sigma$, and denoting $g_{\nu}(\sigma)=2\Phi\left(\frac{\nu}{\sigma}\right) - 1$, we obtain:
\begin{equation}
    \frac{\partial V_{\nu}}{\partial\sigma}(0,\sigma) = \frac{1}{4\nu^2}\frac{\partial g_{\nu}}{\partial\sigma}(\sigma)\left[1-2g_{\nu}(\sigma)\right]
\end{equation}
More explicitly, using chain rule, we have
\begin{equation}
    \frac{\partial g_{\nu}}{\partial \sigma}(\sigma) = -\frac{2\nu}{\sigma^2}\phi\left(\frac{\nu}{\sigma}\right)
\end{equation}
Note that this expression is negative for every $\sigma\in(0,\infty)$. So to find the interval on which $V_{\nu}$ is increasing, we require the following to be true
\begin{equation}
\begin{aligned}
    & 1-2\left(2\Phi\left(\frac{\nu}{\sigma}\right) - 1\right) = 1-2g_{\nu}(\sigma) \le 0 \\
    & \iff \frac{1}{2} \le 2\Phi\left(\frac{\nu}{\sigma}\right) -1 \\
    & \iff \Phi^{-1}\left(\frac{3}{4}\right) \le \frac{\nu}{\sigma}.
\end{aligned}
\end{equation}
Since $\Phi^{-1}\left(\frac{3}{4}\right)\approx 0.67$, it holds that $\sigma\le 1.48\nu$. Knowing that $V_{\nu}(0,\sigma)$ is continuous at $0$ (simple application of dominated convergence), it follows that on the interval $[0,1.48\nu]$, $V_{\nu}(0,\sigma)$ is increasing.
\end{proof}
\noindent\textbf{Remark}. 
The reason we only consider $V_{\nu}(0,\sigma)$ instead of $\sup_{x\in\mathbb{R}}V_{\nu}(x,\sigma)$ for the boxcar kernel case is that, from experiments we observed that for $\nu$ and $\sigma$ that is not very small, the maximizer of $V_{\nu}(x,\sigma)$ with respect to $x$ is very close to $x=0$, and for $\nu$ that is not small, $\sup_{x\in\mathbb{R}}V_{\nu}(x,\sigma)$ does not attain a global maximum at very small $\sigma$'s.

\begin{proof}[Sketch of proof of theorem \ref{maximum attained for lipschitz and continuous}]
There are essentially two things that we need to show: 
\begin{enumerate}
    \item The function $\sup_{x\in\mathbb{R}}V_{\nu}(x,\sigma)$ is continuous as a function of $\sigma$ on $[0,\infty)$;
    \item $\sup_{x\in\mathbb{R}}V_{\nu}(x,\sigma)$ tends to $0$ when $\sigma\to\infty$
\end{enumerate}
If the above two conditions hold, then with a standard argument using compact sets to exhaust $[0,\infty)$ and studying the behavior of $\sup_{x\in\mathbb{R}}V_{\nu}(x,\sigma)$ on these compact sets, the desired result follows.

Looking at the expression of $\sup_{x\in\mathbb{R}}V_{\nu}(x,\sigma)$:
\begin{equation}
\begin{aligned}
    \sup_{x\in\mathbb{R}}V_{\nu}(x,\sigma)
    = & \sup_{x\in\mathbb{R}}\biggl\{ \int_{\mathbb{R}} h_{\nu}^2(x-\theta) p_{\Theta_{\sigma}}(\theta)d\theta - \\
    & \left(\int_{\mathbb{R}} h_{\nu}(x-\theta) p_{\Theta_{\sigma}}(\theta)d\theta \right)^2 \biggr\}
\end{aligned}
\end{equation}
We note that, to study $\lim_{\sigma\to\sigma_0}\sup_{x\in\mathbb{R}}V_{\nu}(x,\sigma)$ for $\sigma_0\in[0,\infty]$, the basic argument of relying on the weak convergence of probability measures using the dominated convergence theorem does not really work here due to the presence of the $\sup_{x\in\mathbb{R}}$ term.

To resolve the above issue, we rely on the following classical analysis result: 
\begin{lemma}
Suppose we have a sequence of functions $\{g_n:\mathbb{R}\to\mathbb{R}\}_{n=1}^{\infty}$ that converges uniformly to a function $g:\mathbb{R}\to\mathbb{R}$, and $\sup_{x\in\mathbb{R}}g(x)<\infty$. Then the following is true:
\begin{equation}
    \lim_{n\to\infty} \sup_{x\in\mathbb{R}} g_n(x) = \sup_{x\in\mathbb{R}}g(x)
\end{equation}
\end{lemma}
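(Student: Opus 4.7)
The plan is to give a standard $\varepsilon$--$N$ sandwich argument: use uniform convergence to bound $g_n$ pointwise between $g-\varepsilon$ and $g+\varepsilon$, and then transfer this bound to the suprema using the monotonicity and translation-invariance of the supremum operation. The assertion is essentially the statement that $\sup$ is $1$-Lipschitz with respect to the uniform norm, so no deep tools are required.

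First, I would fix an arbitrary $\varepsilon>0$. Uniform convergence of $\{g_n\}$ to $g$ supplies an integer $N$ such that
$$g(x)-\varepsilon\;\le\;g_n(x)\;\le\;g(x)+\varepsilon\qquad\text{for every } n\ge N\text{ and every } x\in\mathbb{R}.$$
Second, I would invoke two elementary facts: (i) if $f(x)\le c(x)$ for all $x$, then $\sup_x f(x)\le\sup_x c(x)$; and (ii) for any constant $a$, $\sup_x\bigl(g(x)+a\bigr)=\sup_x g(x)+a$. Applied to the upper bound, these give $\sup_x g_n(x)\le \sup_x g(x)+\varepsilon$, which together with the hypothesis $\sup_x g(x)<\infty$ guarantees that $\sup_x g_n(x)$ is finite for all $n\ge N$. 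Applied symmetrically to the lower bound, they give $\sup_x g_n(x)\ge\sup_x g(x)-\varepsilon$.

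Combining the two inequalities yields
$$\Bigl|\sup_{x\in\mathbb{R}} g_n(x)-\sup_{x\in\mathbb{R}}g(x)\Bigr|\;\le\;\varepsilon\qquad\text{for all }n\ge N,$$
which is exactly the definition of $\lim_{n\to\infty}\sup_{x\in\mathbb{R}} g_n(x)=\sup_{x\in\mathbb{R}} g(x)$. Since $\varepsilon>0$ was arbitrary, the proof is complete.

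There is no substantive obstacle in this argument. The only point that genuinely requires the finiteness hypothesis $\sup_x g(x)<\infty$ is the step where the difference $\sup g_n-\sup g$ is formed: one must first know that both suprema are finite real numbers before writing a Cauchy-type inequality on them. The pointwise bound $g_n(x)\le g(x)+\varepsilon\le \sup_y g(y)+\varepsilon<\infty$ delivers this finiteness automatically for all $n\ge N$, which is the only place the hypothesis is used.
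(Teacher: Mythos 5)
Your proof is correct. The paper states this lemma without proof, invoking it as a classical analysis fact (with details deferred to the supplementary notes), and your $\varepsilon$--$N$ sandwich argument --- in effect, that $\sup$ is $1$-Lipschitz with respect to the uniform norm --- is exactly the standard argument intended, including the accurate observation that the hypothesis $\sup_{x} g(x) < \infty$ is used only to ensure both suprema are finite real numbers (for $n \ge N$) before forming their difference.
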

It follows from the above lemma that, $\sup_{x\in\mathbb{R}}V_{\nu}(x,\sigma)$ is continuous at $\sigma_0 \in[0,\infty)$ if for any sequence $\{\sigma_n\}_{n=1}^{\infty}$ satisfying $\sigma_n\to\sigma_0$, $V_{\nu}(x,\sigma_n)$ tends \textit{uniformly} to $V_{\nu}(x,\sigma_0)$; similar arguments can be applied to showing $\lim_{\sigma\to\infty}\sup_{x\in\mathbb{R}}V_{\nu}(x,\sigma)=0$. However, showing uniform convergence does require more global assumptions on the function $h_{\nu}$ in addition to those stated in definition \ref{def:appendix smoothing kernel}, namely Lipschitz continuity or continuity with compact support. We leave elaborations on these technical details to the supplementary notes.
\end{proof}


\section*{Acknowledgment}
The work is funded, in part, by the Air Force Research Lab and Leidos. The authors would like to thank Michael Rucci, Barry Karch, Daniel LeMaster and Edward Hovenac of the Air Force Research Lab for many insightful discussions. The authors also thank Nantheera Anantrasirichai, a co-author of \cite{Anantrasirichai2013}, and Peyman Milanfar, a co-author of \cite{Milanfar2013}, for generously sharing their MATLAB code.

This work has been cleared for public release carrying the approval number 88ABW-2019-2438.

\ifCLASSOPTIONcaptionsoff
  \newpage
\fi

\bibliographystyle{IEEEtran}
\bibliography{egbib}

\begin{thebibliography}{10}
\providecommand{\url}[1]{#1}
\csname url@samestyle\endcsname
\providecommand{\newblock}{\relax}
\providecommand{\bibinfo}[2]{#2}
\providecommand{\BIBentrySTDinterwordspacing}{\spaceskip=0pt\relax}
\providecommand{\BIBentryALTinterwordstretchfactor}{4}
\providecommand{\BIBentryALTinterwordspacing}{\spaceskip=\fontdimen2\font plus
\BIBentryALTinterwordstretchfactor\fontdimen3\font minus
  \fontdimen4\font\relax}
\providecommand{\BIBforeignlanguage}[2]{{%
\expandafter\ifx\csname l@#1\endcsname\relax
\typeout{** WARNING: IEEEtran.bst: No hyphenation pattern has been}%
\typeout{** loaded for the language `#1'. Using the pattern for}%
\typeout{** the default language instead.}%
\else
\language=\csname l@#1\endcsname
\fi
#2}}
\providecommand{\BIBdecl}{\relax}
\BIBdecl

\bibitem{roggemann1996imaging}
\BIBentryALTinterwordspacing
M.~Roggemann, B.~Welsh, and B.~Hunt, \emph{Imaging Through Turbulence}, ser.
  Laser \& Optical Science \& Technology.\hskip 1em plus 0.5em minus
  0.4em\relax Taylor \& Francis, 1996. [Online]. Available:
  \url{https://books.google.com/books?id=nuIC-Mk0R4UC}
\BIBentrySTDinterwordspacing

\bibitem{Kolmogorov1941}
A.~{Kolmogorov}, ``{The Local Structure of Turbulence in Incompressible Viscous
  Fluid for Very Large Reynolds' Numbers},'' \emph{Akademiia Nauk SSSR
  Doklady}, vol.~30, pp. 301--305, 1941.

\bibitem{Anantrasirichai2013}
N.~{Anantrasirichai}, A.~{Achim}, N.~G. {Kingsbury}, and D.~R. {Bull},
  ``Atmospheric turbulence mitigation using complex wavelet-based fusion,''
  \emph{IEEE Transactions on Image Processing}, vol.~22, no.~6, pp. 2398--2408,
  June 2013.

\bibitem{Lou2013}
Y.~Lou, S.~Ha~Kang, S.~Soatto, and A.~Bertozzi, ``Video stabilization of
  atmospheric turbulence distortion,'' \emph{Inverse Problems and Imaging},
  vol.~7, no.~3, pp. 839--861, Aug. 2013.

\bibitem{Milanfar2013}
X.~{Zhu} and P.~{Milanfar}, ``Removing atmospheric turbulence via
  space-invariant deconvolution,'' \emph{IEEE Transactions on Pattern Analysis
  and Machine Intelligence}, vol.~35, no.~1, pp. 157--170, Jan. 2013.

\bibitem{tyson2010principles}
\BIBentryALTinterwordspacing
R.~Tyson, \emph{Principles of Adaptive Optics}, ser. Series in Optics and
  Optoelectronics.\hskip 1em plus 0.5em minus 0.4em\relax CRC Press, 2010.
  [Online]. Available: \url{https://books.google.com/books?id=x1PUYBvHHqcC}
\BIBentrySTDinterwordspacing

\bibitem{Gilles2016}
J.~Gilles and S.~Osher, ``Wavelet burst accumulation for turbulence
  mitigation,'' \emph{Journal of Electronic Imaging}, vol.~25, p. 033003, May
  2016.

\bibitem{Hardie2017}
\BIBentryALTinterwordspacing
R.~C. Hardie, M.~A. Rucci, A.~J. Dapore, and B.~K. Karch, ``Block matching and
  wiener filtering approach to optical turbulence mitigation and its
  application to simulated and real imagery with quantitative error analysis,''
  \emph{Optical Engineering}, vol.~56, no.~7, p. 071503, 2017. [Online].
  Available: \url{https://doi.org/10.1117/1.OE.56.7.071503}
\BIBentrySTDinterwordspacing

\bibitem{Lau2017}
C.~P. Lau, Y.~H. Lai, and L.~M. Lui, ``Restoration of atmospheric
  turbulence-distorted images via {RPCA} and quasiconformal maps,''
  \emph{Inverse Problems}, Mar. 2019.

\bibitem{Nieuwenhuizen2017}
\BIBentryALTinterwordspacing
R.~P.~J. Nieuwenhuizen, A.~W.~M. van Eekeren, J.~Dijk, and K.~Schutte,
  ``Dynamic turbulence mitigation with large moving objects,''
  \emph{Proceedings of SPIE}, vol. 10433, p. 104330S, Oct. 2017. [Online].
  Available: \url{https://doi.org/10.1117/12.2277840}
\BIBentrySTDinterwordspacing

\bibitem{huebner2012}
\BIBentryALTinterwordspacing
C.~S. Huebner, ``Turbulence mitigation of short exposure image data using
  motion detection and background segmentation,'' \emph{Proceedings of SPIE},
  vol. 8355, p. 83550I, May 2012. [Online]. Available:
  \url{https://doi.org/10.1117/12.918255}
\BIBentrySTDinterwordspacing

\bibitem{Oreifej2013}
O.~{Oreifej}, X.~{Li}, and M.~{Shah}, ``Simultaneous video stabilization and
  moving object detection in turbulence,'' \emph{IEEE Transactions on Pattern
  Analysis and Machine Intelligence}, vol.~35, no.~2, pp. 450--462, Feb. 2013.

\bibitem{Halder2015}
\BIBentryALTinterwordspacing
K.~K. Halder, M.~Tahtali, and S.~G. Anavatti, ``Geometric correction of
  atmospheric turbulence-degraded video containing moving objects,''
  \emph{Optics Express}, vol.~23, no.~4, pp. 5091--5101, Feb 2015. [Online].
  Available: \url{http://www.opticsexpress.org/abstract.cfm?URI=oe-23-4-5091}
\BIBentrySTDinterwordspacing

\bibitem{Anan2018}
N.~Anantrasirichai, A.~Achim, and D.~R. Bull, ``Atmospheric turbulence
  mitigation for sequences with moving objects using recursive image fusion,''
  Aug. 2018, available online at: https://arxiv.org/abs/1808.03550.

\bibitem{zhengqin2018}
Z.~Li, Z.~Murez, D.~Kriegman, R.~Ramamoorthi, and M.~Chandraker, ``Learning to
  see through turbulent water,'' in \emph{2018 IEEE Winter Conference on
  Applications of Computer Vision (WACV)}, Mar. 2018, pp. 512--520.

\bibitem{Wen2010}
Z.~Wen, A.~Lambert, D.~Fraser, and H.~Li, ``Bispectral analysis and recovery of
  images distorted by a moving water surface,'' \emph{Applied Optics}, vol.~49,
  no.~33, pp. 6376--6384, Nov. 2010.

\bibitem{Droege2012}
D.~R. Droege, R.~C. Hardie, B.~S. Allen, A.~J. Dapore, and J.~C. Blevins, ``A
  real-time atmospheric turbulence mitigation and super-resolution solution for
  infrared imaging systems,'' \emph{Proceedings of SPIE}, vol. 8355, pp.
  83\,550R--1 -- 83\,550R--17, 2012.

\bibitem{Kamenetsky2018}
D.~Kamenetsky, S.~Y. Yiu, and M.~Hole, ``Image enhancement for face recognition
  in adverse environments,'' \emph{2018 Digital Image Computing: Techniques and
  Applications (DICTA)}, pp. 1--6, 2018.

\bibitem{zhao2013}
S.~Zhao, B.~Wang, L.~Gong, Y.~Sheng, W.~Cheng, X.~Dong, and B.~Zheng,
  ``Improving the atmosphere turbulence tolerance in holographic ghost imaging
  system by channel coding,'' \emph{Journal of Lightwave Technology}, vol.~31,
  no.~17, pp. 2823--2828, Sep. 2013.

\bibitem{Liu2009optflow}
C.~Liu, ``Beyond pixels: Exploring new representations and applications for
  motion analysis.'' Ph.D. dissertation, Massachusetts Institute of Technology,
  Jan. 2009.

\bibitem{aubaillyLucky}
\BIBentryALTinterwordspacing
M.~Aubailly, M.~A. Vorontsov, G.~W. Carhart, and M.~T. Valley, ``Automated
  video enhancement from a stream of atmospherically-distorted images: the
  lucky-region fusion approach,'' \emph{Proceedings of SPIE}, vol. 7463, p.
  74630C, Aug. 2009. [Online]. Available:
  \url{https://doi.org/10.1117/12.828332}
\BIBentrySTDinterwordspacing

\bibitem{Fried65}
\BIBentryALTinterwordspacing
D.~L. Fried, ``Statistics of a geometric representation of wavefront
  distortion,'' \emph{Journal of the Optical Society of America}, vol.~55,
  no.~11, pp. 1427--1435, Nov. 1965. [Online]. Available:
  \url{http://www.osapublishing.org/abstract.cfm?URI=josa-55-11-1427}
\BIBentrySTDinterwordspacing

\bibitem{chan2017plug}
S.~H. Chan, X.~Wang, and O.~A. Elgendy, ``Plug-and-play {ADMM} for image
  restoration: Fixed-point convergence and applications,'' \emph{IEEE
  Transactions on Computational Imaging}, vol.~3, no.~1, pp. 84--98, 2017.

\bibitem{Shan2018}
Q.~Shan, J.~Jia, and A.~Agarwala, ``High-quality motion deblurring from a
  single image,'' \emph{ACM Transactions on Graphics (SIGGRAPH)}, vol.~27,
  no.~3, Aug. 2008.

\bibitem{Xu_2018}
X.~{Xu}, J.~{Pan}, Y.~{Zhang}, and M.~{Yang}, ``Motion blur kernel estimation
  via deep learning,'' \emph{IEEE Transactions on Image Processing}, vol.~27,
  no.~1, pp. 194--205, Jan. 2018.

\bibitem{Chakrabarti2016}
A.~Chakrabarti, ``A neural approach to blind motion deblurring,'' in
  \emph{European Conference on Computer Vision}, B.~Leibe, J.~Matas, N.~Sebe,
  and M.~Welling, Eds.\hskip 1em plus 0.5em minus 0.4em\relax Cham: Springer
  International Publishing, 2016, pp. 221--235.

\bibitem{Hirsch2010}
M.~Hirsch, S.~Sra, B.~Sch\"{o}lkopf, and S.~Harmeling, ``Efficient filter flow
  for space-variant multiframe blind deconvolution,'' in \emph{The IEEE
  Conference on Computer Vision and Pattern Recognition (CVPR)}, June 2010, pp.
  607--614.

\bibitem{wasserman2004statistics}
L.~Wasserman, \emph{All of Statistics: A Concise Course in Statistical
  Inference}.\hskip 1em plus 0.5em minus 0.4em\relax New York, NY: Springer
  Science+Business Media, Inc., 2004.

\bibitem{bernstein1946}
S.~N. Bernstein, \emph{The Theory of Probabilities}.\hskip 1em plus 0.5em minus
  0.4em\relax Moscow: Gastehizdat Publishing House, 1946.

\end{thebibliography}


\begin{thebibliography}{1}\itemsep=-1pt

\bibitem{carothers2010}
N.~L. Carothers.
\newblock {\em Real Analysis}.
\newblock Cambridge University Press, 2000.

\bibitem{HardieSimulator}
R.~C. Hardie, J.~D. Power, D.~A. LeMaster, D.~R. Droege, S.~Gladysz, and
  S.~Bose-Pillai.
\newblock Simulation of anisoplanatic imaging through optical turbulence using
  numerical wave propagation with new validation analysis.
\newblock {\em Optical Engineering}, 56(7):1 -- 16 -- 16, 2017.

\bibitem{RoggemannSimulator}
M.~C.~R. Jeremy P.~Bos.
\newblock Technique for simulating anisoplanatic image formation over long
  horizontal paths.
\newblock {\em Optical Engineering}, 51(10):1 -- 9 -- 9, 2012.

\bibitem{royden2010}
H.~L. Royden and P.~M. Fitzpatrick.
\newblock {\em Real Analysis, Fourth Edition}.
\newblock Pearson Education, 2010.

\bibitem{SchmidtTurbBook}
J.~Schmidt.
\newblock {\em Numerical simulation of optical wave propagation: With examples
  in MATLAB}.
\newblock SPIE Press, Jan. 2010.

\end{thebibliography}

%









\end{document}